\documentclass[12pt]{article}

\usepackage{amsmath, amssymb, amsthm}
\usepackage{amsfonts}
\usepackage{mathtools}        
\usepackage{bm}               

\usepackage[T1]{fontenc}
\usepackage[utf8]{inputenc}

\usepackage{graphicx}
\usepackage{booktabs}         
\usepackage{caption}
\usepackage{float}
\usepackage{tabularx}
\usepackage[hidelinks]{hyperref}
\usepackage{xurl}           

\usepackage[authoryear, round]{natbib}
\usepackage{setspace}
\def\spacingset#1{\renewcommand{\baselinestretch}{#1}\small\normalsize}

\usepackage{xcolor}
\usepackage{enumitem}         
\usepackage[nopatch=footnote]{microtype} 
\usepackage{etoolbox}
\usepackage{titlesec}
\usepackage{comment}
\usepackage[multiple]{footmisc}

\newtheorem{theorem}{Theorem}
\newtheorem{lemma}{Lemma}
\newtheorem{corollary}{Corollary}

\newtheorem{remark}{Remark}
\theoremstyle{definition}

\begin{document}

\spacingset{1}  

\vspace*{0.5in}  

\begin{center}
  {\LARGE\bf Generated outcomes as generated regressors: Equivalences in recursive causal estimation}
\end{center}
\medskip
\begin{center}
  Wisse Rutgers$^*$ \quad and \quad Rahul Singh$^\dagger$ \\[6pt]
  {\small $^*$CEMFI, \texttt{wisse.rutgers@cemfi.edu.es}} \\
  {\small $^\dagger$Harvard University, \texttt{rahul\_singh@fas.harvard.edu}} \\[10pt]
  June 2026
\end{center}

\bigskip

\begin{abstract}
\noindent Time-varying treatment effects, surrogate-identified treatment effects, and mediation effects can all be written as recursive regressions, in which each regression's predicted values become generated outcomes for the next regression. 
We study how standard causal estimators behave in this setting. 
Formally, we compare the recursive plug-in, recursive balancing weight, and recursive doubly robust estimators. 
When every stage is fitted by ordinary least squares (OLS), the three recursive estimators coincide in any finite sample, whether or not the models are correctly specified. 
As such, estimation by recursively regressing generated outcomes is numerically equivalent to estimation by recursively balancing generated regressors. 
Under ridge penalisation for the balancing weights, the doubly robust estimator is a backward recursion of stage-wise blends of penalised and OLS regressions. 
The weight on the recursive OLS regression decays geometrically in the number of time periods. 
Therefore, the intuition from the cross-sectional setting, where the bias  correction moves the estimator towards OLS, applies less and less as the number of time periods increases.
For general convex penalties, we derive an identity at each stage. 
\end{abstract}

\noindent{\it Keywords:} Doubly robust estimators, augmented balancing weights, time-varying treatment effects, surrogacy, mediation, recursive functionals

\vfill
\newpage

\spacingset{1.5}


\section{Introduction}
\label{sec:intro}

Many of the counterfactual means estimated in empirical research are identified as recursions of regressions.
For example, the mean outcome under a two-period treatment path is identified by two regressions iterated back to back: first, the outcome is regressed on the full history;  then the predicted outcome, read at the counterfactual path, is regressed on the first period \citep{robins1986}. 
The mean outcome using surrogate variables shares this structure: the surrogate index is the regression of the long-run outcome on the surrogates in one sample, which is then regressed on the treatment and covariates in another sample \citep{prentice1989surrogate}.
The natural indirect effect in mediation analysis is yet another example: first, the outcome is regressed on treatment, mediator and covariates; then the predicted outcome, evaluated at the opposite treatment, is regressed on the treatment and covariates \citep{robins1992identifiability, pearl2001direct}. 
Across all of these examples, an initial regression generates a prediction, which is used as a generated outcome in a later regression. 
The classic identification results are stated as recursive regressions, i.e. with generated outcomes.

While the recursive regression formulation is in terms of the outcome mechanism, another formulation is in terms of the treatment mechanism. For example, the mean outcome under a two-period treatment path can also be expressed as the mean outcome after recursive reweighting by inverse propensity scores. These recursive weights may be viewed as recursively balancing the covariates of the treated and untreated subpopulations, using baseline covariates in the initial balancing and generated covariates in the subsequent balancing. Yet another formulation is doubly robust by augmenting the treatment mechanism formulation (in terms of generated regressors) with the outcome mechanism formulation (in terms of generated outcomes) \citep{bang2005doublyrobust}. Phrased another way, generated regressors can be used to debias generated outcomes. In summary, empirical researchers studying time-varying treatment effects, surrogate analysis, and mediation analysis face a plethora of estimators.

In cross sectional causal inference, it is well known that certain regression, balancing weight, and doubly robust estimators are numerically equivalent \citep{robins_performance}. Recent work has shown that, in cross sectional causal inference, debiasing a regularized regression estimator with a balancing weight effectively shrinks the regression towards least squares \citep{bruns2025augmented}. In this paper, we ask: does the equivalence documented for cross sectional causal inference extend to longitudinal causal inference, where there is the additional richness of generated outcomes and generated regressors? Moreover, does the interpretation of debiasing as shrinkage towards least squares generalize?

Our primary contribution is a positive answer to the first question: estimation by recursively regressing generated outcomes is numerically equivalent to estimation by recursively balancing generated regressors. 
This equivalence is exact when, in each period, the recursive regressions and recursive balancing weights are linear in the same dictionary of basis functions and unregularized. 
The equivalence is approximate when, in each period, the recursive regressions and recursive balancing weights are linear in the same dictionary of basis functions but regularized. 
Several nonparametric estimators are linear in their basis functions, such as series, lasso, and kernel methods; see Appendix~\ref{app:kernel-ridge}.

Our secondary contribution is a partially negative answer to the second question: the shrinkage interpretation of debiasing vanishes as the number of time periods grows.
While the doubly robust estimator with ridge-regularized balancing weights is still pulled towards OLS, the strength of that pull decays geometrically in the number of time periods. 
We conclude that the number of time periods is centrally important in understanding regularization. 
Only in the special case of cross sectional causal inference (with one time period) is it the case that debiasing and undersmoothing are two sides of the same coin.

\subsection{Related work}

Most directly, we build on previous work that characterizes numerical equivalences in cross-sectional causal inference \citep{rubin1980randomization,robins_performance, kline2011oaxaca, chattopadhyay2023implied}. See \citet{bruns2025augmented} for a comprehensive review and a general statement for the class of bounded linear functionals considered by \citet{chernozhukov2022automatic}. Recently, \citet{rotnitzky2025onestep} generalize this equivalence to the class of mixed bias linear functionals defined by \cite{rotnitzky2021mixedbias}. We extend the equivalence in a different direction, to the class of recursive functionals of  \cite{chernozhukov2022nested}. Unlike earlier work, the class we study includes the canonical models of time-varying treatment effects, surrogate analysis, and mediation analysis, which are widely used in empirical research.

Whereas previous works on recursive balancing weights propose estimators and analyse their (probabilistic) convergence properties, in this work we analyse their (deterministic) algebraic equivalences. 
Recursive balancing criteria have been proposed for parametric \citep{bang2005doublyrobust}, kernel \citep{kallus2021optimalbalancing}, high dimensional linear \citep{viviano2021dynamic}, and general machine learning \citep{chernozhukov2022nested} function spaces.
Our contribution is not to propose a new balancing criterion. Instead, we show when the balancing recursion with generated regressors is numerically identical to, or algebraically decomposable against, the regression recursion with generated outcomes.

More broadly, we contribute to a vast literature on doubly robust estimation, by shedding new light on how various estimators are interconnected. 
The literature establishes orthogonality, consistency, and asymptotic for recursive causal parameters \citep{molina2017multiple,luedtke2017sequential,rotnitzky2017multiply,chernozhukov2022nested}. 
We study a complementary question: how do the regression, balancing weight, and doubly robust estimators relate algebraically? We find that the main equivalence from cross sectional causal inference generalizes to longitudinal causal inference, however the intuition that debiasing amounts to under-smoothing does not.

Section~\ref{sec:setup} defines the class of recursive functionals, and details three leading examples: time-varying treatment, surrogate, and mediation analysis. 
Sections~\ref{sec:ols}, \ref{sec:ridge}, and~\ref{sec:general} study the cases with no regularization, ridge regularization, and general convex regularization, respectively. 
Section~\ref{sec:conclusion} concludes.


\section{Setup and framework}
\label{sec:setup}

\subsection{General recursive functionals}
\label{sec:nested-functionals}

We observe $n$ i.i.d.\ copies of a random vector $W$, the concatenation of all variables observed across time periods. The parameters we study are built from $T \geq 1$ stages of recursive conditional expectations, and we call $T$ the number of time periods. Following \citet{chernozhukov2022nested}, each stage $t$ contains three time-varying objects: (i) the conditioning variables $Z_t$; (ii) an expectation $\mathbb{E}_t(\cdot)$, taken with respect to a (possibly) time-varying measure; and (iii) a linear formula $m_t$, defined below. A base expectation $\mathbb{E}_0(\cdot)$ averages over the population that defines the target. In the time-varying treatment example below $Z_t$ records the full observed history and expectation measures do not change. However,  in the surrogacy example the treatment leaves the conditioning set between stages and the expectation measure changes from the observational group to the experimental group. Our setting nests both.  When every stage shares the population measure, we drop the subscripts and write $\mathbb{E}(\cdot)$. The last stage, $t = T$, is the one with the outcome $Y$ as its response; the first stage, $t = 1$, defines the target.

That target, $\theta_0$, is an average of a counterfactual outcome: the mean outcome had treatment followed a counterfactual path. Units do not necessarily follow the full path, so $\theta_0$ is a functional of a distribution we never fully observe. Time-varying treatment effects, surrogate-identified treatment effects, and mediation effects all take this form, and they are our running examples. Setting $T = 1$ collapses the framework to the single-period problem of \citet{chernozhukov2022automatic,bruns2025augmented}, which is our reference point throughout.

\subsubsection{Identification via backward outcome regressions}
\label{sec:identification}

The recursive structure identifies $\theta_0$ by a backward recursion of conditional expectations. It is the g-computation formula of \citet{robins1986} extended to additional counterfactuals.

The recursion starts at the last stage, $t = T$, which regresses the outcome on the variables observed there:
\begin{equation}
    f_T(z_T) := \mathbb{E}_T[Y \mid Z_T = z_T],
    \label{eq:fT}
\end{equation}
where $Z_T$ collects those final-stage variables. Each outer stage regresses the previous stage's output on its own conditioning set. For $t = T-1, \ldots, 1$,
\begin{equation}
    f_t(z_t) := \mathbb{E}_t\!\left[m_{t+1}(W;\, f_{t+1}) \mid Z_t = z_t\right],
    \label{eq:ft-recursive}
\end{equation}
where $m_{t+1}(W; g)$ is a mean-square-continuous linear functional of $g$. We call $m_{t+1}(W; f_{t+1})$ a generated outcome: it is built from nuisance $f_{t+1}$ and evaluated at an argument that may be counterfactual. The generated outcome then serves as the response one stage further out, so identification iterates backward through the stages.

Taking the first-stage expectation identifies the target by the mean of the first-stage generated outcome,
\begin{equation}
    \theta_0 = \mathbb{E}_0\!\left[m_1(W;\, f_1)\right].
    \label{eq:theta}
\end{equation}
Any parameter of the form~\eqref{eq:fT}--\eqref{eq:theta} is a recursive functional in the sense of \citet{chernozhukov2022nested}: a model in this class is specified by the number of time periods $T$ and the time-varying objects $\{Z_t, \mathbb{E}_t(\cdot), m_t\}$. The right-hand side of~\eqref{eq:theta} is a functional of the observed-data distributions, and it is the object every estimator in this paper targets. What gives the recursion a causal meaning is a separate matter, and the identifying assumptions differ across applications: time-varying treatment, surrogacy, and mediation each rest on their own conditions, stated with each running example in Section~\ref{sec:examples}. The algebra of this paper needs only the recursive form~\eqref{eq:fT}--\eqref{eq:theta}, whatever assumptions give it causal content.

\subsubsection{Identification via forward Riesz regressions}
\label{sec:riesz-identification}

Each outcome regression has a forward-pass Riesz representer $\alpha_t$, defined stage by stage through the Riesz representation theorem \citep{chernozhukov2022nested}. The Riesz representer is a function, and its evaluations give a vector called the balancing weights.  We take $\alpha_t$ to be the representer in the space of functions of $Z_t$ that are square-integrable under the stage-$t$ measure, so it is $Z_t$-measurable by construction, and we write $\alpha_t(Z_t)$ throughout.\footnote{If a representer on the full vector $W$ exists, $\alpha_t(Z_t)$ is its conditional expectation given $Z_t$ under the stage-$t$ measure, the minimal such representer; this convention is what makes the linear form $\alpha_t(Z_t) = \phi_t'\eta_t$ of Section~\ref{sec:linear-setup} well-posed, since any representer on $W$ is pinned down only up to additions with zero conditional mean given $Z_t$.} Set $\alpha_0 := 1$. The first representer $\alpha_1$ turns the functional $m_1$ into an inner product:
\begin{equation}
    \mathbb{E}_0[m_1(W; g)] = \mathbb{E}_1[\alpha_1(Z_1)\, g(Z_1)]
    \quad \text{for all square-integrable } g,
    \label{eq:alpha1-riesz}
\end{equation}
and for $t = 2, \ldots, T$, $\alpha_t$ represents the functional $L_t(g) := \mathbb{E}_{t-1}[\alpha_{t-1}(Z_{t-1})\, m_t(W; g)]$:
\begin{equation}
    \mathbb{E}_{t-1}[\alpha_{t-1}(Z_{t-1})\, m_t(W; g)] = \mathbb{E}_t[\alpha_t(Z_t)\, g(Z_t)]
    \quad \text{for all square-integrable } g.
    \label{eq:alphat-riesz}
\end{equation}
Existence of these representers is at the population level: each represented functional must be mean-square continuous, which in the running examples follows from sequential positivity and overlap between adjacent stage measures. The finite-sample identities of Sections~\ref{sec:ols}--\ref{sec:general} do not need it: they are statements about the coefficient vectors the stage programmes return, and they hold whether or not a population $\alpha_t$ exists. Once $\alpha_{t-1}$ is replaced by an estimate $\hat\alpha_{t-1}$, that estimate enters the stage-$t$ Riesz regression as a generated regressor, the forward-pass counterpart of the generated outcome.

The representers identify $\theta_0$ by a forward recursion. Start from $\theta_0 = \mathbb{E}_0[m_1(W; f_1)] = \mathbb{E}_1[\alpha_1(Z_1) f_1(Z_1)]$ and substitute $f_1(Z_1) = \mathbb{E}_1[m_2(W; f_2) \mid Z_1]$:
\[
    \theta_0 = \mathbb{E}_1[\alpha_1(Z_1)\, m_2(W; f_2)] = \mathbb{E}_2[\alpha_2(Z_2)\, f_2(Z_2)],
\]
where the second equality uses~\eqref{eq:alphat-riesz}. Iterating through all $T$ stages gives the inverse-probability-weighted (IPW) representation
\begin{equation}
    \theta_0 = \mathbb{E}_T[\alpha_T(Z_T)\, Y],
    \label{eq:ipw-id}
\end{equation}
a weighted average of the outcome under the last-stage measure. In this final expression, $\alpha_T(Z_T)$ may be viewed as the balancing weights for generated regressors. Specifically, the regressors being balanced at time $T$ are recursively generated by $\alpha_{t}(Z_t)$ for $t<T$.

\subsubsection{The doubly robust moment}
\label{sec:debiasing}

Estimating $f_t$ can introduce regularisation bias. In the plug-in estimator $\hat{\mathbb{E}}_0[m_1(W; \hat{f}_1)]$, that bias accumulates from the last stage backward. First-order bias can be eliminated by a Neyman-orthogonal moment, built by applying the Riesz representation theorem recursively.

Write the stage-$t$ residual, i.e. the prediction error of $f_t$ for its generated outcome, as
\begin{equation}
    \varepsilon_t(W) :=
    \begin{cases}
        m_{t+1}(W; f_{t+1}) - f_t(Z_t) & t = 1, \ldots, T-1, \\[3pt]
        Y - f_T(Z_T)                    & t = T.
    \end{cases}
    \label{eq:residuals}
\end{equation}
We write $\hat\varepsilon_t$ for the same residual evaluated at estimated nuisances, with $\hat{f}_{t+1}$ and $\hat{f}_t$ in place of $f_{t+1}$ and $f_t$. Collecting the corrections gives the orthogonal moment \citep{chernozhukov2022nested}:
\begin{equation}
    \theta_0 = \mathbb{E}_0\!\left[m_1(W; f_1)\right] + \sum_{t=1}^{T} \mathbb{E}_t\!\left[\alpha_t(Z_t)\, \varepsilon_t(W)\right].
    \label{eq:debiased-general}
\end{equation}
Each correction $\mathbb{E}_t[\alpha_t(Z_t)\, \varepsilon_t(W)]$ is taken under its own stage measure and is zero at the true nuisances: $f_t$ is by definition the stage-$t$ conditional mean of its response, so $\mathbb{E}_t[\varepsilon_t(W) \mid Z_t] = 0$, and the weight $\alpha_t(Z_t)$ is a function of the same conditioning set. Orthogonality follows from the bias expansion below, in which every term is a product of two estimation errors. The estimator built from the orthogonal moment attains the semiparametric efficiency bound in many cases.

The recursive moment has a recursive mixed bias property \citep{chernozhukov2022nested}, the recursive counterpart of the bias structure that \citet{rotnitzky2021mixedbias} characterise for a single pair of nuisances. Take any alternative nuisances $(f_t^*, \alpha_t^*)$ with errors $\tilde{f}_t = f_t^* - f_t$ and $\tilde\alpha_t = \alpha_t^* - \alpha_t$. Evaluating the moment at the alternatives, the bias is
\begin{equation}
    \theta^* - \theta_0 = \sum_{t=1}^{T} \mathbb{E}_t\!\left[\tilde\alpha_t(Z_t)\bigl(m_{t+1}(W; \tilde{f}_{t+1}) - \tilde{f}_t(Z_t)\bigr)\right],
    \label{eq:mixed-bias}
\end{equation}
with the convention $m_{T+1}(W; \tilde{f}_{T+1}) := 0$. The inner error enters as the generated outcome $m_{t+1}(W; \tilde{f}_{t+1})$. Each summand is a product of a representer error and an outcome error, so it vanishes when $\tilde\alpha_t = 0$ (the stage-$t$ representer is correct) or when $\tilde{f}_{t+1} = \tilde{f}_t = 0$ (both adjacent outcome regressions are correct). This stage-wise robustness is the recursive analogue of double robustness: at $T = 1$, equation~\eqref{eq:mixed-bias} is the single term $-\mathbb{E}_1[\tilde\alpha_1 \tilde{f}_1]$, which vanishes if either $\alpha_1$ or $f_1$ is correct. \citet{rotnitzky2017multiply} establish multiple robustness of this stage-wise kind for estimators of the time-varying treatment estimand of Example~2.

Two special cases of the moment~\eqref{eq:debiased-general} are given below. At $T = 1$, with a single population measure, it is the standard doubly robust (AIPW) moment
\begin{equation}
    \theta_0 = \mathbb{E}[m_1(W; f_1)] + \mathbb{E}[\alpha_1(Z_1)(Y - f_1(Z_1))],
    \label{eq:dr-t1}
\end{equation}
our non-recursive baseline. With distinct base and stage measures it is instead the covariate-shift moment of \citet{chernozhukov2023covshifts}. At $T = 2$ it expands to
\begin{align}
    \theta_0 = \;
    &\mathbb{E}_0[m_1(W; f_1)] \notag \\
    &+ \mathbb{E}_1\!\left[\alpha_1(Z_1)\bigl(m_2(W; f_2) - f_1(Z_1)\bigr)\right] \notag \\
    &+ \mathbb{E}_2\!\left[\alpha_2(Z_2)\bigl(Y - f_2(Z_2)\bigr)\right],
    \label{eq:debiased-moment}
\end{align}
the leading recursive case for the running examples.

\subsection{Examples}
\label{sec:examples}

We provide four examples that are nested by the recursive framework. The first ($T = 1$) is the non-recursive baseline. The other three ($T = 2$) show how recursion arises in time-varying treatment, surrogacy, and mediation.

\paragraph{Example 1: Single period ($T = 1$).}
We observe $n$ i.i.d.\ copies of $W = (Z, Y)$, where $Z$ collects covariates and possibly a treatment indicator $D$. The single nuisance function is $f_1(z) = \mathbb{E}[Y \mid Z = z]$, the single Riesz representer $\alpha_1$ satisfies~\eqref{eq:alpha1-riesz}, and $\theta_0 = \mathbb{E}[m_1(W; f_1)]$. The doubly robust moment reduces to the standard AIPW form~\eqref{eq:dr-t1}. This setting covers the average treatment effect \citep{robins1994estimating} and other bounded linear functionals \citep{chernozhukov2022locallyrobust}. \citet{bruns2025augmented} show that when both the outcome model and the Riesz representer are estimated by ridge regression, the augmented estimator is algebraically equivalent to a single undersmoothed outcome regression; this equivalence is the direct precursor to our recursive results.

\paragraph{Example 2: $T = 2$ dynamic treatment effects.}
We observe $n$ i.i.d.\ trajectories $W = (X_1, D_1, X_2, D_2, Y)$, with time-varying covariates $X_t$, binary treatments $D_t$, and final outcome $Y$. We set $Z_1 = (X_1, D_1)$ and $Z_2 = (X_1, D_1, X_2, D_2)$. The parameter of interest is the counterfactual mean $\theta_0(d_1, d_2) := \mathbb{E}[Y(d_1, d_2)]$. Three  conditions at each stage $t$ identify $\theta_0$ via the g-computation formula \citep{robins1986}. (i) If $(D_1,D_2)=(d_1,d_2)$ then $Y=Y(d_1,d_2)$. (ii) Potential outcomes are conditionally independent of the stage-$t$ treatment.  (iii) Treatment propensities are bounded away from zero. Under these conditions the nuisance functions are:
\begin{align*}
    f_2(Z_2) &= \mathbb{E}[Y \mid X_1, D_1, X_2, D_2], \\
    f_1(Z_1) &= \mathbb{E}[f_2(X_1, d_1, X_2, d_2) \mid X_1, D_1],
\end{align*}
where $m_2(W; g) = g(X_1, d_1, X_2, d_2)$ evaluates $g$ at the counterfactual treatment sequence $(d_1, d_2)$, and $\theta_0 = \mathbb{E}[f_1(X_1, d_1)]$, so $m_1(W; g) = g(X_1, d_1)$. 

\paragraph{Example 3: Surrogacy.}
We observe an experimental sample $(X, D, S)$ and an observational sample $(X, S, Y)$, with binary treatment $D$, surrogate $S$, and long-run outcome $Y$. The estimand is $\theta_0 = \mathbb{E}_0[Y(1)] - \mathbb{E}_0[Y(0)]$. Identification rests on the surrogate validity assumption of \citet{athey2025surrogate}, which requires the conditional outcome mean to agree across the two populations, $\mathbb{E}_2[Y \mid X, S] = \mathbb{E}_1[Y \mid X, S]$, where the measures $\mathbb{E}_1$ and $\mathbb{E}_2$ in this case denote the experimental and observational samples, and $\mathbb{E}_0$ is the target population expectation. The nuisances are then
\begin{align*}
    f_2(X, S) &= \mathbb{E}_2[Y \mid X, S] , \\
    f_1(X, D) &= \mathbb{E}_1[f_2(X, S) \mid X, D] .
\end{align*}
The stage functionals are evaluations: $m_2(W; g) = g(X, S)$ reads $g$ at the realised surrogate, with no counterfactual substitution, so the recursion content of the inner stage is the shift across samples and not a shift of the treatment. The outer functional is a contrast, $m_1(W; g) = g(X, 1) - g(X, 0)$. The two-sample structure means the inner and outer regressions are estimated on different datasets.

Two additional assumptions identify $\theta_0$. Common covariates, $p_1(X) = p_2(X)$, allows the surrogate distribution to be shifted across samples. Experimental unconfoundedness, $S(d) \perp D \mid X$ under measure $\mathbb{E}_1$, identifies the surrogate distribution under each arm. 

\paragraph{Example 4: Mediation analysis.}
We observe $n$ i.i.d.\ copies of $W = (X, D, M, Y)$, with covariates $X$, binary treatment $D$, mediator $M$, and outcome $Y$. The estimand is the cross-world mean $\theta_0 = \mathbb{E}[Y(d,\, M(1{-}d))]$, the mean outcome when the treatment is set to $d$ but the mediator is drawn as under treatment $1{-}d$  \citep{robins1992identifiability, pearl2001direct}. Under a sequential ignorability assumption \citep{imai2010mediation}, $\theta_0$ is identified by the mediation formula, a $T = 2$ recursive functional:
\begin{align*}
    f_2(X, D, M) &= \mathbb{E}[Y \mid X, D, M], \\
    f_1(X, D)    &= \mathbb{E}[f_2(X, d, M) \mid X, D],
\end{align*}
where $m_2(W; g) = g(X, d, M)$ evaluates $g$ at treatment $d$, and $\theta_0 = \mathbb{E}[f_1(X, 1{-}d)]$, so $m_1(W; g) = g(X, 1{-}d)$. The same formula arises in the alternative framework of interventional counterfactuals \citep{robins2010alternative}. The two stages evaluate at different treatments. This sets mediation apart from Example~2, where one counterfactual sequence $(d_1, d_2)$ enters at both stages. 

\subsection{Linear estimation framework}
\label{sec:linear-setup}

To make the algebra explicit, similar to \citet{bruns2025augmented}, we work with estimators that are linear in a dictionary of basis functions. At each stage $t = 1, \ldots, T$, let $\phi_t : \mathcal{Z}_t \to \mathbb{R}^{k_t}$ be such a dictionary with each component square-integrable. Further, define $\phi_t^d$ as the image of the dictionary under the stage functional: $(\phi_t^d)_j := m_t(W;\, \phi_{t,j})$ for $j = 1, \ldots, k_t$. We take both nuisances to be linear in the dictionary, $f_t(Z_t) = \phi_t' \beta_t$ and $\alpha_t(Z_t) = \phi_t' \eta_t$, with coefficient vectors $\beta_t, \eta_t \in \mathbb{R}^{k_t}$. Since $m_t$ is linear in its functional argument, the definition gives the representation
\begin{equation}\label{eq:representability}
    m_t(W;\, \phi_t' b) = (\phi_t^d)'\, b
    \quad \text{for all } b \in \mathbb{R}^{k_t}, \quad t = 1, \ldots, T,
\end{equation}
so applying $m_t$ to a linear function swaps $\phi_t$ for $\phi_t^d$ and leaves the coefficients alone. The superscript $d$ records the leading case, in which $m_t$ substitutes a counterfactual treatment (or treatment path) and $\phi_t^d$ is simply $\phi_t$ with the treatment overwritten. 

When the estimand is a contrast, as in the surrogacy estimand $\mathbb{E}_0[Y(1)] - \mathbb{E}_0[Y(0)]$, the outer functional is the combination $m_1(W; g) = g(X, 1) - g(X, 0)$, so $\phi_1^d = \phi_1^1 - \phi_1^0$, a difference of evaluations instead of a single one. The surrogacy inner stage does not substitute anything. Instead, $m_2(W; g) = g(X, S)$ reads $g$ at the realised surrogate, so $\phi_2^d = \phi_2$, and this stage shifts from the observational sample to the experimental sample.

Many nonparametric estimators have the property of linearity in $\phi_t$, e.g. series, lasso, and kernel methods, as well as certain neural networks and random forests. See \citet[Section~2.3]{bruns2025augmented} for discussion.

We use the sample analogue of the time-varying $\mathbb{E}_t$ of Section~\ref{sec:nested-functionals}, denoted by $\hat{\mathbb{E}}_t$, which averages over the sample drawn from the stage-$t$ population. When there is only one measure, the single operator $\hat{\mathbb{E}}$ is used. Write $\hat{G}_t := \hat{\mathbb{E}}_t[\phi_t \phi_t']$ for the sample Gram matrix at stage $t$, and $\hat{M}_t := \hat{\mathbb{E}}_t[\phi_t\, (\phi_{t+1}^d)']$, $t = 1, \ldots, T-1$, for the cross-moment matrix linking adjacent stages, both under the outer stage-$t$ measure. Finally, at each stage $t$, let $P_t : \mathbb{R}^{k_t} \to \mathbb{R}$ be a convex penalty on the outcome coefficients, with parameter $\lambda_t \geq 0$, and $Q_t : \mathbb{R}^{k_t} \to \mathbb{R}$ a convex penalty on the Riesz coefficients, with parameter $\delta_t \geq 0$. In the notation of \citet{bruns2025augmented}, $\hat{G}_1$ is their $\frac{1}{n}\Phi_p^\top\Phi_p$ and $\hat{\mathbb{E}}_0[\phi_1^d]$ their $\bar\Phi_q$.

\subsubsection{Recursive outcome regression}
\label{sec:nested-regression}

The simplest estimator of $\theta_0$ is the plug-in estimator, which estimates each $f_t$ by an outcome regression and substitutes it into the identification formula. Each stage solves a regularised least-squares problem. The base case $t = T$ regresses $Y$ on the dictionary $\phi_T$:
\begin{equation}\label{eq:outcome-reg-T}
    \hat\beta_T = \arg\min_{\beta \in \mathbb{R}^{k_T}} \left\{
        \frac{1}{2}\hat{\mathbb{E}}_T\!\left[(Y - \phi_T'\beta)^2\right] + \tfrac{\lambda_T}{2}P_T(\beta)
    \right\},
\end{equation}
where $P_T$ is convex and $\lambda_T \geq 0$ the regularisation parameter. At each stage $t = T{-}1, \ldots, 1$, the inner coefficient $\hat\beta_{t+1}$ produces the generated outcome $(\phi_{t+1}^d)'\hat\beta_{t+1}$, which is the target for stage $t$:
\begin{equation}\label{eq:outcome-reg-t}
    \hat\beta_t = \arg\min_{\beta \in \mathbb{R}^{k_t}} \left\{
        \frac{1}{2}\hat{\mathbb{E}}_t\!\left[\bigl((\phi_{t+1}^d)'\hat\beta_{t+1} - \phi_t'\beta\bigr)^2\right] + \tfrac{\lambda_t}{2}P_t(\beta)
    \right\}.
\end{equation}
The first-order conditions of~\eqref{eq:outcome-reg-T}--\eqref{eq:outcome-reg-t} define the backward pass:
\begin{align}
    \hat{G}_T\hat\beta_T + \tfrac{\lambda_T}{2}\nabla P_T(\hat\beta_T) &= \hat{\mathbb{E}}_T[Y \phi_T], \notag \\
    \hat{G}_t\hat\beta_t + \tfrac{\lambda_t}{2}\nabla P_t(\hat\beta_t) &= \hat{M}_t\,\hat\beta_{t+1}, \quad t = T-1, \ldots, 1.
    \label{eq:beta-backward}
\end{align}
The plug-in estimator is then
\[
    \hat\theta^{P} := \hat{\mathbb{E}}_0\!\left[m_1(W;\, \hat{f}_1)\right]
    = \hat{\mathbb{E}}_0\!\left[(\phi_1^d)'\right]\hat\beta_1.
\]
Here, the superscript $P$ records the outcome-regression penalty path $P = (P_1, \ldots, P_T)$ that produces $\hat\beta_1$.

\subsubsection{Recursive Riesz regression}
\label{sec:nested-riesz}

The dual strategy estimates the representers $\alpha_t$ and reweights with them, instead of modelling the outcome functions. Each stage solves a regularised Riesz loss. The base case is stage $1$, which is a non-recursive problem expressed by \citet{chernozhukov2022automatic,singh2021kernel,chernozhukov2021riesz,chernozhukov2023covshifts} as:
\begin{equation}\label{eq:riesz-t1}
    \hat\eta_1 = \arg\min_{\eta \in \mathbb{R}^{k_1}} \left\{
        \frac{1}{2}\hat{\mathbb{E}}_1\!\left[(\phi_1'\eta)^2\right]
        - \hat{\mathbb{E}}_0\!\left[(\phi_1^d)'\eta\right]
        + \tfrac{\delta_1}{2}Q_1(\eta)
    \right\},
\end{equation}
where $Q_1$ is the convex penalty and $\delta_1 \geq 0$ the regularisation parameter. At each stage $t = 2, \ldots, T$, the inner coefficient $\hat\eta_{t-1}$ produces the generated regressor $\hat\alpha_{t-1} = \phi_{t-1}'\hat\eta_{t-1}$, which enters the stage-$t$ Riesz loss for the functional $L_t(g) = \hat{\mathbb{E}}_{t-1}[\hat\alpha_{t-1}(Z_{t-1})\, m_t(W; g)]$ of \citet{chernozhukov2022nested}:
\begin{equation}\label{eq:riesz-t}
    \hat\eta_t = \arg\min_{\eta \in \mathbb{R}^{k_t}} \left\{
        \frac{1}{2}\hat{\mathbb{E}}_t\!\left[(\phi_t'\eta)^2\right]
        - \hat{\mathbb{E}}_{t-1}\!\left[\phi_{t-1}'\hat\eta_{t-1}\, (\phi_t^d)'\eta\right]
        + \tfrac{\delta_t}{2}Q_t(\eta)
    \right\}.
\end{equation}
The first-order conditions of~\eqref{eq:riesz-t1}--\eqref{eq:riesz-t} define the forward pass:
\begin{align}
    \hat{G}_1\hat\eta_1 + \tfrac{\delta_1}{2}\nabla Q_1(\hat\eta_1) &= \hat{\mathbb{E}}_0[\phi_1^d], \notag \\
    \hat{G}_t\hat\eta_t + \tfrac{\delta_t}{2}\nabla Q_t(\hat\eta_t) &= \hat{M}_{t-1}'\,\hat\eta_{t-1}, \quad t = 2, \ldots, T.
    \label{eq:eta-forward}
\end{align}
The balancing weight estimator is then
\[
    \hat\theta^{Q} := \hat{\mathbb{E}}_T\!\left[\hat\alpha_T \cdot Y\right]
    = \hat{\mathbb{E}}_T\!\left[Y\phi_T'\right]\hat\eta_T,
\]
the inverse-probability-weighted form of the Riesz regression estimator. Here, the superscript $Q$ denotes the Riesz penalty path $Q = (Q_1, \ldots, Q_T)$ that produces $\hat\eta_T$. 

\subsubsection{Balancing weights and the recursive Riesz regression}
\label{sec:dcb}

The recursive Riesz regression of the previous subsection has an equivalent reading as a recursive balancing procedure, in which $\hat\alpha_t = \phi_t'\hat\eta_t$ assigns unit $i$ at stage $t$ the weight $w_{i,t} = \phi_t(z_{t,i})'\hat\eta_t$. The reweighted stage-$t$ feature profile is then $\tfrac{1}{n}\sum_{i=1}^n w_{i,t}\,\phi_t(z_{t,i}) = \hat{G}_t\hat\eta_t$, and the forward-pass problems that produce $\hat\eta_t$ trade the variance of these weights against their imbalance relative to a target profile. This is the structure that \citet{bruns2025augmented}, building on \citet{zubizarreta2015stable}, \citet{athey2018residual}, and \citet{hirshberg2021augmented}, and many others, analyse for $T=1$. The change from one step to multiple steps is that the non-recursive target $\bar\phi^d = \hat{\mathbb{E}}[\phi^d]$ is replaced by a generated target $\hat\tau_t$.

\paragraph{Equivalence of recursive balancing and recursive Riesz regression.} The $T=1$ balancing-Riesz equivalence carries over to the recursive setting stage by stage. At each stage $t$, define the recursive target vector
\[
    \hat\tau_t :=
    \begin{cases}
        \hat{\mathbb{E}}_0[\phi_1^d] & t = 1, \\
        \hat{M}_{t-1}'\hat\eta_{t-1} & t = 2, \ldots, T,
    \end{cases}
\]
which replaces the simple target $\bar\phi^d$ of the non-recursive case. 

\begin{lemma}[Recursive balancing-Riesz equivalence]\label{prop:nested-equiv}
Fix a stage $t \in \{1, \ldots, T\}$ and condition on the previous-stage estimate $\hat\eta_{t-1}$ (and hence $\hat\tau_t$). Assume $\hat{G}_t$ is invertible, and pair the imbalance norm with its dual penalty: $\|\cdot\|_* = \|\cdot\|_2$ with $Q_t(\eta) = \|\eta\|_2^2$, or $\|\cdot\|_* = \|\cdot\|_\infty$ with $Q_t(\eta) = \|\eta\|_1$.\footnote{Without invertibility the constrained and penalised forms determine $\hat\eta_t$ only up to an element of the null space of $\hat{G}_t$. Any such element $v$ satisfies $\phi_t'v = 0$ at every observation in the stage-$t$ sample, so all solutions give the same in-sample weights and the same next-stage target $\hat\tau_{t+1}$, and the estimator is unchanged as long as the weights are evaluated on the sample that defines $\hat{G}_t$.} Then, for the set of unique solutions, the following three formulations yield the same solution $\hat\eta_t$, up to a one-to-one, data-dependent mapping between the regularisation parameters $\delta_t^{(C)} > 0$, $\delta_t^{(P)}$, and $\delta_t^{(R)}$:
\begin{enumerate}[label=(\roman*)]
    \item \emph{Constrained (minimum variance) form:}
    \begin{equation}\label{eq:nested-constrained}
        \hat\eta_t = \arg\min_{\eta} \;\frac{1}{2}\eta'\hat{G}_t\eta
        \quad \text{subject to} \quad
        \left\|\hat{G}_t\eta - \hat\tau_t\right\|_* \leq \delta_t^{(C)}.
    \end{equation}
    \item \emph{Penalised imbalance form:}
    \begin{equation}\label{eq:nested-penalised}
        \hat\eta_t = \arg\min_{\eta} \;\left\|\hat{G}_t\eta - \hat\tau_t\right\|_*^2
        + \delta_t^{(P)}\,\eta'\hat{G}_t\eta.
    \end{equation}
    \item \emph{Riesz regression form:}
    \begin{equation}\label{eq:nested-riesz-form}
        \hat\eta_t = \arg\min_{\eta} \;\frac{1}{2}\eta'\hat{G}_t\eta
        - \hat\tau_t'\eta + \tfrac{\delta_t^{(R)}}{2} Q_t(\eta).
    \end{equation}
\end{enumerate}
\end{lemma}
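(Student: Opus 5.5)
The plan is to treat the stage as a single-period problem. Once we condition on $\hat\eta_{t-1}$, the target $\hat\tau_t$ is a fixed vector in $\mathbb{R}^{k_t}$. All three programmes \eqref{eq:nested-constrained}--\eqref{eq:nested-riesz-form} then depend on the data only through the pair $(\hat G_t,\hat\tau_t)$. They have exactly the form of the $T=1$ balancing/Riesz programmes of \citet{bruns2025augmented}, with $\bar\phi^d$ replaced by $\hat\tau_t$. So the proof is a stage-wise convex-duality argument, and no recursive structure is used beyond fixing $\hat\tau_t$. Write $\|\cdot\|$ for the norm dual to $\|\cdot\|_*$: it is $\|\cdot\|_2$ when $\|\cdot\|_*=\|\cdot\|_2$, and $\|\cdot\|_1$ when $\|\cdot\|_*=\|\cdot\|_\infty$.

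\emph{Step 1: (i) $\Leftrightarrow$ (iii).}
\begin{itemize}
\item Rewrite the constraint through its dual-norm representation, $\|r\|_*\le c$ iff $\mu'r\le c\|\mu\|$ for all $\mu$. Equivalently, introduce a slack $r=\hat G_t\eta-\hat\tau_t$ with multiplier $\mu$.
\item Form the Lagrangian $\tfrac12\eta'\hat G_t\eta-\mu'(\hat G_t\eta-\hat\tau_t)+c\|\mu\|$, minimised over $\eta$ and maximised over $\mu$. Slater's condition holds for $c=\delta_t^{(C)}>0$, because $\eta=\hat G_t^{-1}\hat\tau_t$ is strictly feasible, so strong duality applies.
\item Stationarity in $\eta$ gives $\hat G_t\eta=\hat G_t\mu$. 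Invertibility of $\hat G_t$ then gives $\eta=\mu$.
\item The dual becomes $\max_\mu\{-\tfrac12\mu'\hat G_t\mu+\hat\tau_t'\mu-c\|\mu\|\}$. This is the Riesz form \eqref{eq:nested-riesz-form}, and its maximiser is the primal $\hat\eta_t$.
\item For the $\ell_1$ pairing the penalty is literally $\|\mu\|_1=Q_t(\mu)$, with $\delta_t^{(R)}=2c$.
\item For the $\ell_2$ pairing the dual carries the unsquared norm $c\|\mu\|_2$, whereas $Q_t$ is squared. Compare the first-order conditions $\hat G_t\mu-\hat\tau_t+c\,\mu/\|\mu\|_2=0$ and $\hat G_t\mu-\hat\tau_t+\delta\mu=0$. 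They coincide with $\delta_t^{(R)}=c/\|\hat\mu\|_2$, where $\hat\mu\neq0$ is the solution; the case $\hat\mu=0$ is excluded by the unique-solution restriction when $c<\|\hat\tau_t\|_2$.
\end{itemize}

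\emph{Step 2: (i) $\Leftrightarrow$ (ii).} This is the usual equivalence between the scalarised and $\varepsilon$-constraint forms of the biobjective problem of minimising both variance $\eta'\hat G_t\eta$ and imbalance $\|\hat G_t\eta-\hat\tau_t\|_*^2$; both objectives are convex. Every minimiser of \eqref{eq:nested-penalised} is Pareto optimal. Setting $c=\|\hat G_t\hat\eta-\hat\tau_t\|_*$ recovers it as the solution of \eqref{eq:nested-constrained}. Conversely, the KKT multiplier $\nu>0$ of the active constraint in \eqref{eq:nested-constrained} yields the penalised form with weight $\delta_t^{(P)}$, a monotone function of $\nu$ and $c$. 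The constraint must be active at the optimum whenever $c<\|\hat\tau_t\|_*$, since otherwise $\eta=0$ would be feasible and optimal.

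\emph{Step 3: the parameter maps are one-to-one.} Each map sends a parameter to a scalar summary of the common solution path: imbalance for $c$, and the ratio of imbalance to variance gradients for $\delta^{(P)}$ and $\delta^{(R)}$. I expect this step to be the main obstacle. It requires that along the path of unique solutions the imbalance is strictly decreasing in $c$ and that the variance is strictly monotone, so that no two parameter values give the same solution. I would argue it by strict convexity of $\eta'\hat G_t\eta$, which holds because $\hat G_t\succ0$. Strict convexity makes the solution path unique. Strict monotonicity then follows from the standard exchange argument: if two values of $c$ produced the same $\eta$, the constraint could not be active at both. The restriction to unique solutions in the statement handles the degenerate $\ell_\infty$/$\ell_1$ cases where the penalised problem has flat pieces.
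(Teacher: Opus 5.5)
Your proposal is correct, but it links the three forms differently from the paper.

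\textbf{The paper's route.} It proves (i)$\Leftrightarrow$(ii) by Lagrangian duality under Slater's condition. It uses the same strictly feasible point $\hat G_t^{-1}\hat\tau_t$ that you use, so this is essentially your Step~2. It then proves (ii)$\Leftrightarrow$(iii) by comparing first-order conditions. In the Euclidean case the gradient of (ii) is $2\hat G_t(\hat G_t\eta-\hat\tau_t)+2\delta_t^{(P)}\hat G_t\eta$. Cancelling the invertible $\hat G_t$ gives the first-order condition of the ridge Riesz loss, with $\delta_t^{(R)}=\delta_t^{(P)}$. For the $\ell_\infty$--$\ell_1$ pair the paper only cites the bijection of \citet[Appendix~A]{bruns2025augmented}.

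\textbf{Your route.} You connect (i) and (iii) directly: the Riesz regression with penalty $c\|\mu\|$ is the Lagrangian dual of the minimum-variance programme, and stationarity gives $\eta=\mu$.
\begin{itemize}
\item This gives a self-contained proof of the $\ell_\infty$--$\ell_1$ case, which the paper does not supply. It comes with the explicit, data-independent map $\delta_t^{(R)}=2\delta_t^{(C)}$.
\item It also explains why the imbalance norm must be paired with its dual penalty: the dual norm is exactly what appears as the penalty in the dual problem.
\item The cost is in the Euclidean case. There the dual carries the unsquared $\|\mu\|_2$, so you need the solution-dependent map $\delta_t^{(R)}=c/\|\hat\mu\|_2$ and must compose it with Step~2. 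The paper's first-order-condition comparison gives the identity map $\delta_t^{(R)}=\delta_t^{(P)}$ immediately, and you could use that shortcut for the $\ell_2$ pairing.
\end{itemize}

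\textbf{Small slips.}
\begin{itemize}
\item The Lagrangian you maximise over $\mu$ should contain $-c\|\mu\|$, not $+c\|\mu\|$; with the plus sign the supremum is infinite. The dual you then write down is nonetheless the correct one.
\item It is the condition $c<\|\hat\tau_t\|_2$, not uniqueness, that rules out $\hat\mu=0$. With $\hat G_t\succ 0$ all three objectives are strictly convex, so solutions are always unique. Injectivity fails only at the degenerate end where the solution is $\eta=0$.
\item On the range $c<\|\hat\tau_t\|_*$ the constraint is active, so the solution's imbalance equals $c$ and is increasing in $c$, not decreasing. It is the variance that decreases.
\end{itemize}

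Your exchange argument for the one-to-one property is sound. It is already more than the paper gives, since the paper proves that property only by citation.
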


\begin{proof}
    See Appendix~\ref{sec:setup-proof}
\end{proof}

Stage by stage, then, the forward-pass recursive Riesz regression can be viewed as a sequential balancing procedure. The constrained, penalised, and Riesz formulations of $\hat\eta_t$ all return the same weights.

\paragraph{Connection to dynamic covariate balancing.} \citet{viviano2021dynamic} propose dynamic covariate balancing (DCB) for the time-varying treatment setting where at each period $t$, they solve a constrained programme minimising the $\ell_2$ norm of the weights subject to the dynamic balance condition
\begin{equation}\label{eq:dcb-balance}
    \Bigl\|\textstyle\sum_{i=1}^n \hat\gamma_{i,t-1}\, H_{i,t} - \sum_{i=1}^n \gamma_{i,t}\, H_{i,t}\Bigr\|_\infty \leq \kappa_t,
\end{equation}
where $H_{i,t}$ is the observed history of unit $i$ at period $t$, $\gamma_{i,t}$ is the period-$t$ weight on unit $i$ being chosen, $\hat\gamma_{i,t-1}$ is the weight estimated at period $t-1$, the tolerance $\kappa_t$ bounds the imbalance, and $\hat\gamma_{i,0} = 1/n$ initialises the recursion. DCB also restricts each weight vector to the simplex (non-negative weights summing to one), caps the largest weight, and restricts the weights to the units on the target path, $\gamma_{i,t} = 0$ unless $D_{i,1:t} = d_{1:t}$. On the retained units the observed history equals the counterfactual one, which is what lets the same $H_{i,t}$ stand in for $\phi_t^d$ on both sides of~\eqref{eq:dcb-balance}.

Under the identification $\hat\alpha_t(W_i) = n\,\gamma_{i,t}$ (the DCB weights sum to one while the representer averages to one), condition~\eqref{eq:dcb-balance} bounds the same imbalance as~\eqref{eq:nested-constrained} with $\|\cdot\|_* = \|\cdot\|_\infty$ and $\delta_t^{(C)} = \kappa_t$. The two programmes minimise the weight variance over different sets, form~\eqref{eq:nested-constrained} over the span of the dictionary on the full sample and DCB over free weight vectors that are zero off the target path, so for a generic dictionary the solutions differ even at exact balance. Without the simplex constraint and the weight cap, the two coincide when the stage-$t$ dictionary is saturated in the treatment path, meaning every entry either is $\mathbf{1}\{D_{1:t} = d_{1:t}\}$ times a function of the covariate history or vanishes on the path. 

\subsubsection{The doubly robust estimator}
\label{sec:dml}

The doubly robust estimator takes the sample analogue of the orthogonal moment~\eqref{eq:debiased-general}, evaluated at the estimated nuisances:
\begin{equation}
    \hat\theta^{DR} := \hat{\mathbb{E}}_0\!\left[m_1(W;\, \hat{f}_1)\right]
    + \sum_{t=1}^{T} \hat{\mathbb{E}}_t\!\left[\hat\alpha_t\, \hat\varepsilon_t\right],
    \label{eq:dr-estimator}
\end{equation}
where $\hat\varepsilon_t$ are the sample residuals of Section~\ref{sec:debiasing}: $\hat\varepsilon_t = (\phi_{t+1}^d)'\hat\beta_{t+1} - \phi_t'\hat\beta_t$ for $t < T$, by~\eqref{eq:representability}, and $\hat\varepsilon_T = Y - \phi_T'\hat\beta_T$. In the linear framework, $\hat\theta^{DR}$ is estimated using the backward-pass coefficients $\hat\beta_t$ from~\eqref{eq:outcome-reg-T}--\eqref{eq:outcome-reg-t} and the forward-pass coefficients $\hat\eta_t$ from~\eqref{eq:riesz-t1}--\eqref{eq:riesz-t}.


\section{No regularisation: OLS}
\label{sec:ols}

When we estimate all $2T$ nuisance functions by OLS, with no penalty at any stage, the recursive plug-in estimator $\hat\theta^{P}$, the recursive balancing weight estimator $\hat\theta^{Q}$, and the doubly robust estimator $\hat\theta^{DR}$ coincide exactly. This is an algebraic identity, so it holds in any sample in which each Gram matrix $\hat{G}_t$ is invertible, and whether or not the linear models are correctly specified.

Throughout this section each Gram matrix $\hat{G}_t$ is invertible, that is, the stage-$t$ dictionary has full column rank in the sample, which requires $k_t \leq n_t$. OLS falls under our linear estimators where the regularisation penalties are set to zero, ($P_t = Q_t = 0$ for all $t$), such that the backward and forward passes~\eqref{eq:beta-backward}--\eqref{eq:eta-forward} lead to FOCs:
\begin{align}
    \hat\beta_T^{OLS} &= \hat{G}_T^{-1}\,\hat{\mathbb{E}}_T[Y\phi_T], \notag \\
    \hat\beta_t^{OLS} &= \hat{G}_t^{-1}\,\hat{M}_t\,\hat\beta_{t+1}^{OLS}, \quad t = T-1,\ldots,1,
    \label{eq:backward-ols}
\end{align}
and
\begin{align}
    \hat\eta_1^{OLS} &= \hat{G}_1^{-1}\,\hat{\mathbb{E}}_0[\phi_1^d], \notag \\
    \hat\eta_t^{OLS} &= \hat{G}_t^{-1}\,\hat{M}_{t-1}'\,\hat\eta_{t-1}^{OLS}, \quad t = 2,\ldots,T.
    \label{eq:forward-ols}
\end{align}
The recursive plug-in and balancing weight estimators of Section~\ref{sec:linear-setup} specialise to:
\begin{equation}\label{eq:theta-reg-ipw}
    \hat\theta^{P} = \hat{\mathbb{E}}_0\!\left[(\phi_1^d)'\right]\hat\beta_1^{OLS}, \qquad
    \hat\theta^{Q} = \hat{\mathbb{E}}_T\!\left[Y\phi_T'\right]\hat\eta_T^{OLS}.
\end{equation}
The next proposition records the equivalence and the two facts behind it: every correction is individually zero, and every stage-wise bilinear form already equals the target.
\begin{theorem}[OLS equivalence]\label{prop:ols-equiv}
Assume each Gram matrix $\hat{G}_t$ is invertible, $t = 1, \ldots, T$. Under OLS estimation of all nuisance functions:
\begin{enumerate}[label=(\roman*)]
    \item Every debiasing correction vanishes:
    \begin{equation}\label{eq:ols-corrections-zero}
        \hat{\mathbb{E}}_t\!\left[\hat\alpha_t^{OLS}\,\hat\varepsilon_t\right] = 0, \quad t = 1,\ldots,T.
    \end{equation}
    \item All intermediate terms are equal:
    \begin{equation}\label{eq:ols-bilinear}
        (\hat\eta_t^{OLS})'\hat{G}_t\,\hat\beta_t^{OLS} = \hat\theta^{P} = \hat\theta^{Q}, \quad t = 1,\ldots,T,
    \end{equation}
    and the cross-stage inner products satisfy $(\hat\eta_t^{OLS})'\hat{M}_t\,\hat\beta_{t+1}^{OLS} = \hat\theta^{P}$ for $t = 1,\ldots,T{-}1$.
\end{enumerate}
Consequently, $\hat\theta^{DR} = \hat\theta^{P} = \hat\theta^{Q}$ in any sample in which each $\hat{G}_t$ is invertible, for any number of time periods $T$, whether or not the linear models are correctly specified.
\end{theorem}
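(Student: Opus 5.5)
The plan is to reduce everything to the closed-form OLS recursions \eqref{eq:backward-ols}--\eqref{eq:forward-ols} and exploit the symmetry of each $\hat G_t$. All statements are finite-sample matrix identities, so no probabilistic argument is needed and specification plays no role.

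\emph{Step 1: the corrections vanish.} For $t<T$, the representation \eqref{eq:representability} together with $\hat\alpha_t^{OLS}=\phi_t'\hat\eta_t^{OLS}$ gives
\[
\hat{\mathbb{E}}_t[\hat\alpha_t\hat\varepsilon_t]=(\hat\eta_t^{OLS})'\bigl(\hat M_t\hat\beta_{t+1}^{OLS}-\hat G_t\hat\beta_t^{OLS}\bigr).
\]
This is zero because the backward OLS equation states $\hat G_t\hat\beta_t^{OLS}=\hat M_t\hat\beta_{t+1}^{OLS}$. For $t=T$ the correction equals $(\hat\eta_T^{OLS})'(\hat{\mathbb{E}}_T[Y\phi_T]-\hat G_T\hat\beta_T^{OLS})$, which is zero by the base-case normal equations. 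In words, each OLS residual is orthogonal to the span of $\phi_t$, and $\hat\alpha_t$ lies in that span. This proves (i), and it also gives $\hat\theta^{DR}=\hat\theta^{P}$ directly from \eqref{eq:dr-estimator}.

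\emph{Step 2: define the scalar chain and show it is constant in $t$.} Set $a_t:=(\hat\eta_t^{OLS})'\hat G_t\hat\beta_t^{OLS}$ and $b_t:=(\hat\eta_t^{OLS})'\hat M_t\hat\beta_{t+1}^{OLS}$. Substituting the backward equation gives $a_t=b_t$ for $t<T$. Next, using the symmetry of $\hat G_{t+1}$ and the forward equation $\hat G_{t+1}\hat\eta_{t+1}^{OLS}=\hat M_t'\hat\eta_t^{OLS}$, we get
\[
a_{t+1}=(\hat G_{t+1}\hat\eta_{t+1}^{OLS})'\hat\beta_{t+1}^{OLS}=(\hat\eta_t^{OLS})'\hat M_t\hat\beta_{t+1}^{OLS}=b_t.
\]
Hence $a_1=b_1=a_2=\dots=b_{T-1}=a_T$.

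\emph{Step 3: identify the endpoints.} At $t=1$, symmetry of $\hat G_1$ and the first forward equation give $a_1=(\hat G_1\hat\eta_1^{OLS})'\hat\beta_1^{OLS}=\hat{\mathbb{E}}_0[(\phi_1^d)']\hat\beta_1^{OLS}=\hat\theta^P$. At $t=T$, we have $a_T=(\hat\eta_T^{OLS})'\hat{\mathbb{E}}_T[Y\phi_T]=\hat\theta^Q$, using the base backward equation. Combining the chain with both endpoints yields (ii) and the cross-stage identity, and hence $\hat\theta^{DR}=\hat\theta^P=\hat\theta^Q$.

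There is no real obstacle here. The only points needing care are bookkeeping ones. First, each bilinear form must be taken under the correct stage measure: $\hat M_t$ lives under $\hat{\mathbb{E}}_t$, while the forward target at $t+1$ uses $\hat M_t'$. Second, the correction at stage $t$ must be read with $\hat{\mathbb{E}}_t$, so that $\hat{\mathbb{E}}_t[\phi_t(\phi_{t+1}^d)']=\hat M_t$ matches exactly. This alignment is what makes the surrogacy two-sample case go through unchanged.
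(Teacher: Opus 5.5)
Your proposal is correct and follows essentially the same route as the paper's proof. Each correction vanishes because the OLS residuals are orthogonal to the span of $\phi_t$; the forward and backward OLS normal equations, together with the symmetry of $\hat G_t$, then make the same-stage and cross-stage bilinear forms equal along the chain, with endpoints $\hat\theta^{P}$ at $t=1$ and $\hat\theta^{Q}$ at $t=T$ (the paper iterates this downward from $t=T$ rather than naming the chain).
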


\begin{proof}
See Appendix~\ref{app:proof-ols}.
\end{proof}

Part~(i) does not use OLS on the Riesz side. The proof uses only the outcome-side normal equations and the linearity of the weights, so the corrections vanish for any weights in the span of the regressors. This is a doubly robust property of OLS: adding any linear balancing weights to an OLS outcome model leaves the estimator unchanged. Equivalently, one can show that adding a linear outcome regression to OLS balancing weights does not change the estimator.

\section{Ridge regularisation}
\label{sec:ridge}

With ridge Riesz regressions and any outcome model, the doubly robust estimator can be represented as a single recursive outcome regression, but with augmented coefficients set by a backward recursion. Each stage shrinks towards OLS, but towards an OLS prediction of an already-augmented stage. The weight the estimator leaves on the pure OLS plug-in estimator decays geometrically in the number of time periods. 

Each stage contains two penalties: $\lambda_t$ on the outcome regression and $\delta_t$ on the Riesz regression, each a positive definite diagonal matrix. Define
\[
    A_t := (\hat{G}_t + \delta_t)^{-1}\hat{G}_t, \qquad
    I - A_t = (\hat{G}_t + \delta_t)^{-1}\delta_t,
\]
where the second equation follows from $(\hat{G}_t + \delta_t) - \hat{G}_t = \delta_t$. The matrix $A_t$ is the ridge shrinkage operator of the time $t$ Riesz regression. For a stage-$t$ least-squares problem with a fixed target $b$, ridge predicts $\hat\beta^{R} = A_t \hat\beta^{OLS}$, so ridge is $A_t$ applied to the OLS coefficient, and $I - A_t$ measures the shrinkage between them. 

For a scalar penalty $\delta_t = \delta I$ the two matrices commute, $A_t$ is symmetric, and $0 \prec A_t \prec I$. For a general diagonal $\delta_t$ the matrix $A_t$ need not be symmetric, but it is similar to the symmetric positive definite $(\hat{G}_t + \delta_t)^{-1/2}\hat{G}_t(\hat{G}_t + \delta_t)^{-1/2}$, so its eigenvalues are real and lie in $(0, 1)$ for any $\delta_t \succ 0$. Eigenvalues in $(0,1)$ are what make $A_t$ a contraction away from OLS. 

\subsection{Ridge balancing weights equals ridge regression}
\label{sec:ridge-equiv}

The ridge Riesz first-order conditions generalise \eqref{eq:eta-forward} to:
\begin{align}\label{eq:ridge-riesz-focs}
    \hat\eta_1^{R} &= (\hat{G}_1 + \delta_1)^{-1}\hat{\mathbb{E}}_0[\phi_1^d], \notag\\
    \hat\eta_t^{R} &= (\hat{G}_t + \delta_t)^{-1}\hat{M}_{t-1}'\hat\eta_{t-1}^{R}, \quad t = 2,\ldots,T,
\end{align}
with $\hat\alpha_t^R = \phi_t'\hat\eta_t^R$. Similarly, the ridge outcome regressions generalise \eqref{eq:beta-backward}:
\begin{align}\label{eq:ridge-outcome-focs}
    \hat\beta_T^{R} &= (\hat{G}_T + \lambda_T)^{-1}\hat{\mathbb{E}}_T[Y\phi_T], \notag\\
    \hat\beta_t^{R} &= (\hat{G}_t + \lambda_t)^{-1}\hat{M}_t\hat\beta_{t+1}^{R}, \quad t = T-1,\ldots,1.
\end{align}

Here, the balancing weights estimator built from the ridge Riesz regression still equals the plug-in regression built from the ridge outcome pass, as long as both use the same penalty matrices, $\delta_t = \lambda_t$ at every stage. The estimators are not equal to the doubly robust estimator anymore, as was the case for OLS. 

\begin{lemma}[Ridge balancing weights equals ridge regression]\label{lem:ridge-equiv}
Let the two recursive regressions share the penalty matrix at every stage: $\delta_t = \lambda_t$ for $t = 1, \ldots, T$. Then, the balancing weights estimator equals the plug-in estimator:
\[
    \hat\theta^{Q,R} \;=\; \hat{\mathbb{E}}_T[Y\phi_T']\hat\eta_T^{R}
    \;=\; \hat{\mathbb{E}}_0[(\phi_1^d)']\hat\beta_1^{R} \;=\; \hat\theta^{P,R}.
\]
\end{lemma}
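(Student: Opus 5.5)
The plan is to unroll both recursions into closed-form matrix products and show that the two estimators are the same scalar, read in opposite directions. Write $B_t := \hat{G}_t + \lambda_t = \hat{G}_t + \delta_t$. Since $\hat{G}_t$ is symmetric and $\delta_t = \lambda_t$ is diagonal, $B_t$ is symmetric, and hence so is $B_t^{-1}$. This symmetry is the only structural fact needed.

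First, iterate the backward pass \eqref{eq:ridge-outcome-focs} from $t = T$ down to $t = 1$. This gives
\[
\hat\beta_1^{R} = B_1^{-1}\hat{M}_1 B_2^{-1}\hat{M}_2\cdots B_{T-1}^{-1}\hat{M}_{T-1}B_T^{-1}\hat{\mathbb{E}}_T[Y\phi_T].
\]
Therefore
\[
\hat\theta^{P,R} = \hat{\mathbb{E}}_0[(\phi_1^d)']\,B_1^{-1}\hat{M}_1\cdots \hat{M}_{T-1}B_T^{-1}\,\hat{\mathbb{E}}_T[Y\phi_T].
\]

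Second, iterate the forward pass \eqref{eq:ridge-riesz-focs} from $t=1$ up to $t=T$. This gives
\[
\hat\eta_T^{R} = B_T^{-1}\hat{M}_{T-1}'B_{T-1}^{-1}\cdots \hat{M}_1' B_1^{-1}\hat{\mathbb{E}}_0[\phi_1^d].
\]
Hence $\hat\theta^{Q,R} = \hat{\mathbb{E}}_T[Y\phi_T']\,\hat\eta_T^R$, which is a scalar. Transposing it and using $(B_t^{-1})' = B_t^{-1}$ reproduces exactly the expression for $\hat\theta^{P,R}$, because $(\hat{M}_t')' = \hat{M}_t$. This proves the claim. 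The case $T=1$ is the single identity $\hat{\mathbb{E}}_T[Y\phi_1']B_1^{-1}\hat{\mathbb{E}}_0[\phi_1^d] = \hat{\mathbb{E}}_0[(\phi_1^d)']B_1^{-1}\hat{\mathbb{E}}_1[Y\phi_1]$.

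Alternatively, one can mirror the argument behind Theorem~\ref{prop:ols-equiv}(ii) with a telescoping chain. Show by induction that $(\hat\eta_t^R)'B_t\hat\beta_t^R$ is constant in $t$:
\[
(\hat\eta_t^R)'B_t\hat\beta_t^R = (\hat\eta_t^R)'\hat{M}_t\hat\beta_{t+1}^R = (B_{t+1}\hat\eta_{t+1}^R)'\hat\beta_{t+1}^R.
\]
The first equality uses the outcome normal equation and the second uses the Riesz normal equation. The last step again uses the symmetry of $B_{t+1}$. The endpoints give $\hat\theta^{P,R}$ at $t=1$ and $\hat\theta^{Q,R}$ at $t=T$.

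The only real obstacle is making sure the penalty matrices enter symmetrically. If $\delta_t \neq \lambda_t$, the transpose of $(\hat G_t+\delta_t)^{-1}$ does not match $(\hat G_t+\lambda_t)^{-1}$, and the argument breaks. I would also note that invertibility of $B_t$ holds automatically for positive definite diagonal penalties, even when $\hat{G}_t$ is singular.
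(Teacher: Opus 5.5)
Your proposal is correct and is essentially the paper's proof. Both unroll the forward and backward ridge passes into the same product of $(\hat{G}_t+\lambda_t)^{-1}$ and $\hat{M}_t$ factors, then transpose the scalar $\hat{\mathbb{E}}_T[Y\phi_T']\hat\eta_T^R$ using the symmetry of each $(\hat{G}_t+\lambda_t)^{-1}$. Your alternative telescoping chain is also valid; it is the ridge analogue of the argument in part (ii) of Theorem~\ref{prop:ols-equiv}, with $\hat{G}_t$ replaced by $\hat{G}_t+\lambda_t$.
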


\begin{proof}
See Appendix~\ref{app:proof-ridge-lemma}.
\end{proof}

This equivalence is specific to ridge with equal penalties. At $T=1$, this equivalence reduces to \cite{kallus2020generalized,hirshberg2019minimaxlinear}.

\subsection{Ridge Riesz with general outcome regressions}
\label{sec:ridge-general}

We now characterise the doubly robust estimator when the Riesz representers are estimated by ridge but the outcome regressions are estimated with a general linear estimator.

\begin{theorem}[Recursive ridge shrinkage]\label{prop:ridge-shrinkage}
Assume each Gram matrix $\hat{G}_t$ is invertible. With ridge Riesz regressions $\hat\eta_1^R, \ldots, \hat\eta_T^R$ as in~\eqref{eq:ridge-riesz-focs}, with penalties $\delta_1, \ldots, \delta_T$, and arbitrary outcome regressions $\hat\beta_1^{Gen}, \ldots, \hat\beta_T^{Gen}$, the doubly robust estimator satisfies $\hat\theta^{DR} = \hat{\mathbb{E}}_0[(\phi_1^d)']\hat\beta_1^{Aug}$, where the augmented coefficients are defined recursively from the last stage:
\begin{align}
    \hat\beta_T^{Aug} &= (I - A_T)\hat\beta_T^{Gen} + A_T\hat\beta_T^{OLS}, \label{eq:augT}\\
    \hat\beta_t^{Aug\text{-}OLS} &:= \hat{G}_t^{-1}\hat{M}_t\hat\beta_{t+1}^{Aug}, \quad t = T-1,\ldots,1, \label{eq:aug-reg}\\
    \hat\beta_t^{Aug} &= (I - A_t)\hat\beta_t^{Gen} + A_t\hat\beta_t^{Aug\text{-}OLS}, \quad t = T-1,\ldots,1. \label{eq:augt}
\end{align}
At each stage, $\hat\beta_t^{Aug\text{-}OLS}$ are the OLS regression coefficients of the stage-$(t+1)$ augmented generated outcomes $(\phi_{t+1}^d)'\hat\beta_{t+1}^{Aug}$ on $\phi_t$.
\end{theorem}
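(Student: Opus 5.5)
The plan is a backward induction on a stage-wise partial sum of the doubly robust estimator~\eqref{eq:dr-estimator}. Write $\hat\tau_t$ for the recursive target of Section~\ref{sec:dcb}, built from the ridge coefficients: $\hat\tau_1 = \hat{\mathbb{E}}_0[\phi_1^d]$ and $\hat\tau_t = \hat{M}_{t-1}'\hat\eta_{t-1}^R$. By~\eqref{eq:ridge-riesz-focs}, $(\hat G_t + \delta_t)\hat\eta_t^R = \hat\tau_t$. Since $\hat G_t + \delta_t$ is symmetric, this gives the key identity
\[
(\hat\eta_t^R)'\hat G_t = \hat\tau_t'(\hat G_t+\delta_t)^{-1}\hat G_t = \hat\tau_t' A_t .
\]
This is the only place the ridge structure enters. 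It converts every Riesz-weighted Gram product into the shrinkage operator $A_t$ acting on the target.

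Next I rewrite the corrections in coefficient form using~\eqref{eq:representability}.
\begin{itemize}
\item For $t<T$: $\hat{\mathbb{E}}_t[\hat\alpha_t^R\hat\varepsilon_t] = (\hat\eta_t^R)'(\hat M_t\hat\beta_{t+1}^{Gen} - \hat G_t\hat\beta_t^{Gen})$.
\item For $t=T$: $\hat{\mathbb{E}}_T[\hat\alpha_T^R\hat\varepsilon_T] = (\hat\eta_T^R)'\hat G_T(\hat\beta_T^{OLS}-\hat\beta_T^{Gen})$, using $\hat{\mathbb{E}}_T[Y\phi_T]=\hat G_T\hat\beta_T^{OLS}$ from~\eqref{eq:backward-ols}.
\end{itemize}
Define the partial sums
\[
S_t := \hat\tau_t'\hat\beta_t^{Gen} + \sum_{s=t}^T \hat{\mathbb{E}}_s[\hat\alpha_s^R\hat\varepsilon_s].
\]
Then $\hat\theta^{DR} = S_1$. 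The claim to prove by backward induction is $S_t = \hat\tau_t'\hat\beta_t^{Aug}$ for all $t$; at $t=1$ this is exactly the statement of Theorem~\ref{prop:ridge-shrinkage}.

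\emph{Base case $t=T$.} Using the key identity,
\[
S_T = \hat\tau_T'\hat\beta_T^{Gen} + \hat\tau_T'A_T(\hat\beta_T^{OLS}-\hat\beta_T^{Gen}) = \hat\tau_T'\bigl[(I-A_T)\hat\beta_T^{Gen} + A_T\hat\beta_T^{OLS}\bigr],
\]
which is $\hat\tau_T'\hat\beta_T^{Aug}$ by~\eqref{eq:augT}.

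\emph{Inductive step.} For $t<T$, expand
\[
S_t = \hat\tau_t'\hat\beta_t^{Gen} + (\hat\eta_t^R)'\hat M_t\hat\beta_{t+1}^{Gen} - (\hat\eta_t^R)'\hat G_t\hat\beta_t^{Gen} + \bigl(S_{t+1} - \hat\tau_{t+1}'\hat\beta_{t+1}^{Gen}\bigr).
\]
The cancellation that drives the telescope is $\hat\tau_{t+1}'\hat\beta_{t+1}^{Gen} = (\hat\eta_t^R)'\hat M_t\hat\beta_{t+1}^{Gen}$, so the generic stage-$(t+1)$ coefficient drops out. For the remaining terms:
\begin{itemize}
\item The key identity turns $(\hat\eta_t^R)'\hat G_t\hat\beta_t^{Gen}$ into $\hat\tau_t'A_t\hat\beta_t^{Gen}$.
\item The induction hypothesis gives $S_{t+1} = \hat\tau_{t+1}'\hat\beta_{t+1}^{Aug} = (\hat\eta_t^R)'\hat M_t\hat\beta_{t+1}^{Aug}$.
\item Inserting $\hat G_t\hat G_t^{-1}$ (valid since $\hat G_t$ is invertible) and using~\eqref{eq:aug-reg}, $(\hat\eta_t^R)'\hat M_t\hat\beta_{t+1}^{Aug} = (\hat\eta_t^R)'\hat G_t\hat\beta_t^{Aug\text{-}OLS} = \hat\tau_t'A_t\hat\beta_t^{Aug\text{-}OLS}$.
\end{itemize}
Collecting terms gives $S_t = \hat\tau_t'[(I-A_t)\hat\beta_t^{Gen} + A_t\hat\beta_t^{Aug\text{-}OLS}]$, which is $\hat\tau_t'\hat\beta_t^{Aug}$ by~\eqref{eq:augt}.

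Finally, the interpretive sentence about $\hat\beta_t^{Aug\text{-}OLS}$ is immediate from~\eqref{eq:backward-ols}. It is the OLS first-order condition with generated outcome $(\phi_{t+1}^d)'\hat\beta_{t+1}^{Aug}$, because $\hat{\mathbb{E}}_t[\phi_t(\phi_{t+1}^d)'\hat\beta_{t+1}^{Aug}] = \hat M_t\hat\beta_{t+1}^{Aug}$.

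The main obstacle is choosing the right inductive quantity. Tracking $\hat\theta^{DR}$ directly leaves cross terms in $\hat\beta_{t+1}^{Gen}$ that do not obviously collapse. Pairing the target $\hat\tau_t$ with the tail of the corrections is what makes each stage's $\hat\beta^{Gen}_{t+1}$ cancel against the next stage's target. The other point needing care is transposition when $\delta_t$ is a general diagonal matrix and $A_t$ is not symmetric. The key identity still holds because only the symmetry of $\hat G_t+\delta_t$ is used.
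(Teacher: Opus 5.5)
Your proof is correct and is essentially the paper's argument. Your partial sum $S_t$ is exactly the doubly robust estimator of the inner problem on stages $t,\ldots,T$ with outer target $\hat\tau_t$, so your backward induction on $t$ is the paper's induction on the number of stages (stated for an arbitrary outer target $w$) with the bookkeeping rearranged, and your key identity $(\hat\eta_t^R)'\hat{G}_t = \hat\tau_t' A_t$ is the paper's step $(\hat\eta_1^R)'\hat{G}_1 = w'A_1$.
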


\begin{proof}
See Appendix~\ref{app:proof-ridge-prop}.
\end{proof}

At time period $T$, the augmented coefficient is an affine blend of the penalised regression $\hat\beta_T^{Gen}$ and the final stage $T$ OLS regression $\hat\beta_T^{OLS}$, weighted by $I - A_T$ and $A_T$. This is equivalent to the single-regression result of \citet{bruns2025augmented}. Each earlier stage has the same form but with a different target: it shrinks towards $\hat\beta_t^{Aug\text{-}OLS}$ instead of $\hat\beta_t^{OLS}$. Here, $\hat\beta_t^{Aug\text{-}OLS}$ are the coefficients of an OLS regression of the augmented generated outcomes from the stage before. 

For the high-dimensional regime $k_t > n_t$ where $\hat{G}_t$ is not invertible, there the OLS targets $\hat\beta_t^{OLS}$ and $\hat\beta_t^{Aug\text{-}OLS}$ are read as minimum-norm solutions, with $\hat{G}_t^{-1}$ replaced by the Moore--Penrose pseudoinverse. Appendix~\ref{app:kernel-ridge} shows equivalent results  where the dictionaries are infinite-dimensional kernel features.

\paragraph{Comparison with \citet{bruns2025augmented}.}
For $T = 1$ the recursion collapses to a single equation:
\begin{equation}\label{eq:bruns-smith}
    \hat\beta_1^{Aug} = (I - A_1)\hat\beta_1^{Gen} + A_1\hat\beta_1^{OLS},
    \qquad
    \hat\theta^{DR} = \hat{\mathbb{E}}_0[(\phi_1^d)']\hat\beta_1^{Aug},
\end{equation}
recovering Proposition~3.2 of \citet{bruns2025augmented} exactly. For $T \geq 2$, the last stage $T$ satisfies~\eqref{eq:bruns-smith} directly, while earlier stages have the augmented generated outcome as target, and therefore the final augmented plug-in estimator will have shrunk towards something else than recursive OLS.

\paragraph{Geometric decay of OLS weight.}
The number of time periods determines how much weight on the pure OLS plug-in coefficients is retained. Take the scalar diagonal case where $\hat{G}_t = \sigma_t^2 I$, then $A_t = a_t I$ with $a_t = \sigma_t^2/(\sigma_t^2+\delta_t) \in (0,1)$. Unrolling the recursion gives:
\begin{align*}
    \hat\beta_T^{Aug} &= (1-a_T)\hat\beta_T^{Gen} + a_T\hat\beta_T^{OLS}, \\
    \hat\beta_{T-1}^{Aug} &= (1-a_{T-1})\hat\beta_{T-1}^{Gen} + a_{T-1}(1-a_T)\hat{G}_{T-1}^{-1}\hat{M}_{T-1}\hat\beta_T^{Gen} + a_{T-1}a_T\hat\beta_{T-1}^{OLS},
\end{align*}
and so on until stage 1. Every term but the last contains some regularisation. The one term without the general regression coefficients, which we call the OLS anchor, is $\bigl(\prod_{t=1}^T a_t\bigr)\hat\beta_1^{OLS}$. Note that this recursively shrunk OLS anchor is equal to a ridge plug-in estimator with the penalties from the Riesz regressions, which is $\hat\beta_1^{R}$ under penalties $\delta_t = \lambda_t$. 

\begin{corollary}[Geometric decay of OLS weight]\label{cor:geometric}
Consider the scalar diagonal case $\hat{G}_t = \sigma_t^2 I$ with Riesz shrinkage $a_t = \sigma_t^2/(\sigma_t^2+\delta_t) \in (0,1)$ for $t = 1, \ldots, T$.
\begin{enumerate}[label=(\roman*)]
    \item For arbitrary outcome regressions $\hat\beta_t^{Gen}$, the coefficient on the recursive OLS plug-in coefficients $\hat\beta_1^{OLS}$ is $\prod_{t=1}^T a_t$.
\end{enumerate}
\end{corollary}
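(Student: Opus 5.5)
The plan is to unroll the backward recursion of Theorem~\ref{prop:ridge-shrinkage} and track, at each stage, the part of $\hat\beta_t^{Aug}$ that contains no general regression coefficient. In the scalar diagonal case $A_t = a_t I$, and the OLS step is $\hat\beta_t^{Aug\text{-}OLS} = \hat{G}_t^{-1}\hat{M}_t\hat\beta_{t+1}^{Aug}$ with $\hat{G}_t^{-1} = \sigma_t^{-2} I$. Write $S_t := \hat{G}_t^{-1}\hat{M}_t$. Then the recursive OLS coefficients satisfy $\hat\beta_t^{OLS} = S_t\hat\beta_{t+1}^{OLS}$ by~\eqref{eq:backward-ols}, so
\[
\hat\beta_1^{OLS} = S_1 S_2\cdots S_{T-1}\hat\beta_T^{OLS}.
\]

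The key claim, proved by backward induction on $t = T, T-1, \ldots, 1$, is the decomposition
\[
\hat\beta_t^{Aug} = \Bigl(\prod_{s=t}^T a_s\Bigr)\hat\beta_t^{OLS} + \sum_{s=t}^{T} c_{t,s}\, S_t\cdots S_{s-1}\hat\beta_s^{Gen},
\]
where $c_{t,s} = (1-a_s)\prod_{r=t}^{s-1} a_r$ and the empty product of $S$'s is the identity.

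The base case $t=T$ is exactly~\eqref{eq:augT} with $A_T = a_T I$. For the inductive step, I substitute the hypothesis for $\hat\beta_{t+1}^{Aug}$ into~\eqref{eq:aug-reg}. Linearity of $S_t$ and $S_t\hat\beta_{t+1}^{OLS} = \hat\beta_t^{OLS}$ give $\hat\beta_t^{Aug\text{-}OLS}$ with OLS part $\bigl(\prod_{s=t+1}^T a_s\bigr)\hat\beta_t^{OLS}$, plus general terms premultiplied by $S_t$. Plugging this into~\eqref{eq:augt} multiplies everything by the scalar $a_t$ and adds $(1-a_t)\hat\beta_t^{Gen}$, which closes the induction. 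Setting $t=1$, the only term free of the general coefficients is the anchor $\bigl(\prod_{t=1}^T a_t\bigr)\hat\beta_1^{OLS}$, which gives part~(i).

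The main obstacle is making the word ``coefficient'' well defined. The arbitrary $\hat\beta_t^{Gen}$ might themselves be functions of the data, or even of $\hat\beta_t^{OLS}$, so the split into OLS and general parts is not unique as a numerical identity. I would handle this as the paper's own unrolled display does. I treat the $\hat\beta_s^{Gen}$ as free arguments, so that $\hat\beta_1^{Aug}$ is an affine map of $(\hat\beta_1^{Gen},\ldots,\hat\beta_T^{Gen}, \hat\beta_T^{OLS})$, and I read the coefficient on $\hat\beta_1^{OLS}$ as the term surviving when all $\hat\beta_s^{Gen}$ are set to zero. The scalar structure $A_t = a_tI$ is what lets $a_t$ commute past $S_t$ and collapse the anchor into a scalar multiple of $\hat\beta_1^{OLS}$. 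Without it, the anchor is $A_1S_1A_2S_2\cdots A_T\hat\beta_T^{OLS}$, which need not be proportional to $\hat\beta_1^{OLS}$.
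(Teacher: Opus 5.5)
Your proposal is correct and takes essentially the same route as the paper: you unroll the recursion of Theorem~\ref{prop:ridge-shrinkage} and note that the only term free of the $\hat\beta_s^{Gen}$ is the branch that takes $a_t$ at every stage, which the OLS backward pass collapses to $\bigl(\prod_{t=1}^T a_t\bigr)\hat\beta_1^{OLS}$. Your explicit backward induction, with weights $(1-a_s)\prod_{r=t}^{s-1}a_r$ on each general term, is a precise version of the paper's branch-counting argument.
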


\begin{proof}
See Appendix~\ref{app:proof-ridge-cor}.
\end{proof}

\paragraph{Shrinkage in the non-diagonal case.}
For non-diagonal Gram matrices, we derive a contraction in the Euclidean norm. For a scalar penalty $\delta_t I > 0$ the matrix $A_t$ is symmetric, with spectral norm $\rho_t := \|A_t\|_2 = \max_j \mu_{t,j}/(\mu_{t,j}+\delta_t) < 1$, where $\mu_{t,j}$ are eigenvalues of $\hat{G}_t$.

\begin{corollary}[Contraction bound for general Gram matrices]\label{cor:contraction}
Let each $\hat{G}_t$ be invertible and each Riesz penalty a positive scalar, such that $\delta_t I > 0$. For arbitrary linear outcome regressions, the anchor satisfies, in the Euclidean norm,
\[
    \bigl\|\hat\beta_1^{Anc}\bigr\|_2 \;\leq\; \Bigl(\prod_{t=1}^{T}\rho_t\Bigr)\Bigl(\prod_{t=1}^{T-1}\bigl\|\hat{G}_t^{-1}\hat{M}_t\bigr\|_2\Bigr)\bigl\|\hat\beta_T^{OLS}\bigr\|_2,
    \qquad \rho_t = \max_j \frac{\mu_{t,j}}{\mu_{t,j}+\delta_t} \,<\, 1.
\]
\end{corollary}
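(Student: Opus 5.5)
The plan is to make the anchor $\hat\beta_1^{Anc}$ explicit by unrolling the recursion of Theorem~\ref{prop:ridge-shrinkage}, and then to bound the resulting matrix product by submultiplicativity of the spectral norm. The recursion \eqref{eq:augT}--\eqref{eq:augt} is linear, in fact affine, jointly in the inputs $(\hat\beta_1^{Gen},\ldots,\hat\beta_T^{Gen},\hat\beta_T^{OLS})$, and the maps $A_t$ and $\hat G_t^{-1}\hat M_t$ are fixed given the data. So $\hat\beta_1^{Aug}$ decomposes uniquely as a sum of linear images of each $\hat\beta_t^{Gen}$ plus a linear image of $\hat\beta_T^{OLS}$. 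I would define the anchor as this last summand: the part of $\hat\beta_1^{Aug}$ obtained by setting every $\hat\beta_t^{Gen}=0$. This matches the OLS anchor used before Corollary~\ref{cor:geometric} in the scalar diagonal case.

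Setting $\hat\beta_t^{Gen}=0$ in the recursion gives a backward recursion for the anchor:
\begin{align*}
    \hat\beta_T^{Anc} &= A_T\hat\beta_T^{OLS},\\
    \hat\beta_t^{Anc} &= A_t\hat G_t^{-1}\hat M_t\hat\beta_{t+1}^{Anc}, \quad t=T-1,\ldots,1.
\end{align*}
By induction this yields the closed form
\[
\hat\beta_1^{Anc}=A_1\bigl(\hat G_1^{-1}\hat M_1\bigr)A_2\bigl(\hat G_2^{-1}\hat M_2\bigr)\cdots\bigl(\hat G_{T-1}^{-1}\hat M_{T-1}\bigr)A_T\,\hat\beta_T^{OLS}.
\]
As a sanity check, in the scalar diagonal case the $a_t$ factor out. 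Using $\hat\beta_t^{OLS}=\hat G_t^{-1}\hat M_t\hat\beta_{t+1}^{OLS}$ from \eqref{eq:backward-ols}, this recovers $(\prod_t a_t)\hat\beta_1^{OLS}$, consistent with Corollary~\ref{cor:geometric}.

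Next I would bound the product using $\|XY\|_2\le\|X\|_2\|Y\|_2$. This gives $\|\hat\beta_1^{Anc}\|_2\le\prod_{t=1}^T\|A_t\|_2\prod_{t=1}^{T-1}\|\hat G_t^{-1}\hat M_t\|_2\,\|\hat\beta_T^{OLS}\|_2$. It then remains to identify $\|A_t\|_2=\rho_t$. With scalar penalty $\delta_t I$, the matrix $A_t=(\hat G_t+\delta_t I)^{-1}\hat G_t$ is a function of the symmetric positive definite $\hat G_t$. Diagonalising $\hat G_t=U\,\mathrm{diag}(\mu_{t,j})\,U'$ gives $A_t=U\,\mathrm{diag}\bigl(\mu_{t,j}/(\mu_{t,j}+\delta_t)\bigr)\,U'$. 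Its spectral norm is therefore the largest of these ratios, and each ratio is strictly below $1$ because $\delta_t>0$ and $\mu_{t,j}>0$ by invertibility.

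The only delicate point is the first step: making precise that ``the anchor'' is the $\hat\beta_T^{OLS}$-component of the linear decomposition. Note that the earlier-stage OLS targets $\hat\beta_t^{Aug\text{-}OLS}$ are not separate inputs; they are generated from $\hat\beta_{t+1}^{Aug}$. I would resolve this by treating $\hat\beta_T^{OLS}$ and the $\hat\beta_t^{Gen}$ as the independent inputs of the linear recursion. Once that is fixed, the rest is the routine norm bound above. I would also remark that the bound is not a contraction by itself unless $\|\hat G_t^{-1}\hat M_t\|_2$ is controlled; the $\rho_t$ factors supply the geometric decay.
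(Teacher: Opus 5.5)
Your proposal is correct and follows essentially the same route as the paper. Both isolate the anchor as the branch of the unrolled recursion that takes $A_t$ at every stage, $A_1\hat{G}_1^{-1}\hat{M}_1A_2\cdots\hat{G}_{T-1}^{-1}\hat{M}_{T-1}A_T\hat\beta_T^{OLS}$, then apply submultiplicativity of the spectral norm and identify $\|A_t\|_2=\rho_t<1$ from the symmetry of $A_t$ as a function of $\hat{G}_t$ under a scalar penalty.
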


\begin{proof}
See Appendix~\ref{app:proof-ridge-contraction}.
\end{proof}

For (non-scalar) diagonal penalty matrices, $A_t$ need not be symmetric and its Euclidean norm can exceed one even though its eigenvalues stay in $(0,1)$. However, the contraction then survives stage by stage in the $(\hat{G}_t+\delta_t)$-weighted norms, with $\rho_t$ replaced by the largest eigenvalue of $A_t$. 

\subsection{Double ridge: Ridge Riesz and ridge outcome}
\label{sec:ridge-ridge}

We now consider the doubly robust estimator where the outcome regression is also estimated with ridge penalties, and a shared penalty matrix at every stage, so $\delta_t = \lambda_t$. 

\begin{corollary}[Double ridge as undersmoothed recursive ridge]\label{cor:double-ridge}
Under double ridge ($\hat\beta_t^{Gen} = \hat\beta_t^R$ and $\delta_t = \lambda_t$ at every stage), the doubly robust estimator is $\hat\theta^{DR} = \hat{\mathbb{E}}_0[(\phi_1^d)']\hat\beta_1^{Aug}$ where the augmented coefficients follow the recursion \eqref{eq:augT}--\eqref{eq:augt} with $\hat\beta_t^{Gen} = \hat\beta_t^R$:
\begin{align}
    \hat\beta_T^{Aug} &= (I-A_T)\hat\beta_T^R + A_T\hat\beta_T^{OLS}, \label{eq:augT-ridge}\\
    \hat\beta_t^{Aug} &= (I-A_t)\hat\beta_t^R + A_t\hat\beta_t^{Aug\text{-}OLS}, \quad t = T-1,\ldots,1, \label{eq:augt-ridge}
\end{align}
with $\hat\beta_t^{Aug\text{-}OLS} = \hat{G}_t^{-1}\hat{M}_t\hat\beta_{t+1}^{Aug}$.
\end{corollary}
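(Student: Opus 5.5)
The corollary is a direct specialisation of Theorem~\ref{prop:ridge-shrinkage}, so the plan is to invoke that theorem with the choice $\hat\beta_t^{Gen} = \hat\beta_t^R$ and then check that nothing in its hypotheses excludes this choice. Theorem~\ref{prop:ridge-shrinkage} allows arbitrary outcome coefficients $\hat\beta_1^{Gen},\ldots,\hat\beta_T^{Gen}$. Only the Riesz side must be ridge with penalties $\delta_t$ and invertible Gram matrices. The ridge outcome coefficients defined by the backward recursion~\eqref{eq:ridge-outcome-focs} are one admissible sequence, fixed vectors once the data are given. Substituting them into~\eqref{eq:augT}--\eqref{eq:augt} gives~\eqref{eq:augT-ridge}--\eqref{eq:augt-ridge} verbatim, with $\hat\beta_t^{Aug\text{-}OLS} = \hat G_t^{-1}\hat M_t\hat\beta_{t+1}^{Aug}$ as in~\eqref{eq:aug-reg}. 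The representation $\hat\theta^{DR} = \hat{\mathbb{E}}_0[(\phi_1^d)']\hat\beta_1^{Aug}$ then follows from the same theorem.

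The one point that needs care is the role of $\delta_t = \lambda_t$. The shrinkage matrix $A_t = (\hat G_t+\delta_t)^{-1}\hat G_t$ comes from the Riesz penalty. Under the shared-penalty assumption, the same $A_t$ also relates ridge and OLS on the outcome side: for a fixed target, $\hat\beta^R = A_t\hat\beta^{OLS}$. Concretely:
\begin{itemize}
\item At stage $T$, we have $\hat\beta_T^R = A_T\hat\beta_T^{OLS}$.
\item At earlier stages, we have $\hat\beta_t^R = A_t \hat G_t^{-1}\hat M_t\hat\beta_{t+1}^R$.
\end{itemize}
This is not needed for the displayed recursion itself. It is what justifies reading the result as ``undersmoothed recursive ridge.'' I would record it as a remark: at stage $T$, $\hat\beta_T^{Aug} = \bigl((I-A_T)A_T + A_T\bigr)\hat\beta_T^{OLS}$, which is ridge with a smaller effective penalty, recovering \citet{bruns2025augmented} at $T=1$.

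I expect the only potential obstacle to be bookkeeping rather than mathematics. We must make sure that $\hat\beta_t^R$ denotes the coefficients of the full ridge backward pass~\eqref{eq:ridge-outcome-focs}, with generated outcomes built from $\hat\beta_{t+1}^R$. It must not be read as a ridge fit on the augmented outcome. We must also make sure the residuals $\hat\varepsilon_t$ in~\eqref{eq:dr-estimator} are computed from these same coefficients. Since Theorem~\ref{prop:ridge-shrinkage} treats the $\hat\beta_t^{Gen}$ as given vectors entering both $\hat f_t$ and $\hat\varepsilon_t$, consistency of this choice is all that is required, and the corollary follows.
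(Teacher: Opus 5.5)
Your proposal is correct and matches the paper, which treats this corollary as an immediate specialisation of Theorem~\ref{prop:ridge-shrinkage} with $\hat\beta_t^{Gen} = \hat\beta_t^R$. As in the paper, you use $\delta_t = \lambda_t$ only for the subsequent reading $\hat\beta_T^{Aug} = [I-(I-A_T)^2]\hat\beta_T^{OLS}$, which agrees with your $((I-A_T)A_T + A_T)\hat\beta_T^{OLS}$. The paper's further interpretation as a ridge fit with effective penalty $\Gamma_T$ additionally assumes that $\lambda_T$ and $\hat{G}_T$ commute (e.g.\ a scalar penalty).
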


The final stage ridge outcome coefficients can be written as $\hat\beta_T^R = A_T\hat\beta_T^{OLS}$, and substituting into~\eqref{eq:augT-ridge} gives
\[
    \hat\beta_T^{Aug} = \bigl[I - (I - A_T)^2\bigr]\hat\beta_T^{OLS},
\]
so the augmented coefficients square the shrinkage towards OLS. When $\lambda_T$ and $\hat{G}_T$ commute, in particular for a scalar penalty $\lambda_T = \lambda I$, the squared gap is itself a ridge shrinkage. Like the undersmoothing characterisation of \citet{bruns2025augmented}, $\hat\beta_T^{Aug}$ is a ridge estimator with the reduced effective penalty
\begin{equation}\label{eq:effective-penalty}
    \Gamma_T = \lambda_T(\hat{G}_T + 2\lambda_T)^{-1}\lambda_T \prec \lambda_T,
\end{equation}
which in the scalar case $\hat{G}_T = \sigma_T^2 I$ is $\Gamma_T = \lambda_T^2/(\sigma_T^2 + 2\lambda_T)$. 

Each sequential stage then shrinks the outcome regression coefficients $\hat\beta_t^R$ not towards the OLS plug-in coefficients, but towards an OLS regression on the augmented generated outcome from the previous stage.


\section{General regularisation}
\label{sec:general}

Finally, we consider $P_t$ and $Q_t$ general convex penalties. The stage coefficients have no closed form. We derive a telescoping identity relating the balancing weight estimator to OLS, and a penalty-gradient decomposition relating the doubly robust estimator to OLS. 

\subsection{Setup and the Riesz residual}
\label{sec:general-setup}

Recall the Riesz target, the right-hand side of the forward pass~\eqref{eq:eta-forward}:
\begin{equation}\label{eq:riesz-target}
    \hat\tau_1 := \hat{\mathbb{E}}_0[\phi_1^d], \qquad \hat\tau_t := \hat{M}_{t-1}'\hat\eta_{t-1} \;\text{ for } t \geq 2.
\end{equation}
For any coefficient vectors $\hat\eta_1, \ldots, \hat\eta_T$, define the Riesz residual at stage $t$ as
\begin{equation}\label{eq:riesz-residual}
    c_t \;:=\; \hat\tau_t - \hat{G}_t\hat\eta_t.
\end{equation}
When $\hat\eta_t$ solves the stage-$t$ Riesz regression~\eqref{eq:riesz-t} with convex penalty $Q_t$ and parameter $\delta_t \geq 0$, convex optimality is the subgradient inclusion
\begin{equation}\label{eq:riesz-foc-general}
    c_t \;\in\; \tfrac{\delta_t}{2}\,\partial Q_t(\hat\eta_t),
\end{equation}
so $c_t = \tfrac{\delta_t}{2}\hat{s}_t$ for the subgradient $\hat{s}_t \in \partial Q_t(\hat\eta_t)$ the solution selects, and $c_t = \tfrac{\delta_t}{2}\nabla Q_t(\hat\eta_t)$ when $Q_t$ is differentiable. Under OLS Riesz regressions ($\delta_t = 0$) there is exact balance, $\hat{G}_t\hat\eta_t = \hat\tau_t$, so $c_t = 0$ at every stage. Under ridge we obtain $c_t = \delta_t\hat\eta_t^R$.

\begin{remark}[Cross-stage dependence]\label{rem:coupling}
The residuals are not independent. Instead, at stage $t \geq 2$ the target $\hat\tau_t = \hat{M}_{t-1}'\hat\eta_{t-1}$ contains $\hat\eta_{t-1}$, which itself satisfies $\hat{G}_{t-1}\hat\eta_{t-1} = \hat\tau_{t-1} - c_{t-1}$, so $c_t$ contains previous residuals $c_1, \ldots, c_{t-1}$ through the forward Riesz recursion. 
\end{remark}

\subsection{Telescoping identity for balancing weights}
\label{sec:general-ipw}

The first identity writes the IPW estimator as OLS corrected by the Riesz residuals. At $T = 1$, \citet[Proposition~3.1]{bruns2025augmented} show that any linear balancing weights estimate the same quantity as a regression on a shifted feature profile, $\hat\theta^{Q} = (\hat{\mathbb{E}}_0[\phi_1^d] - c_1)'\hat\beta_1^{OLS}$. The regularisation shifts the target $\hat{\mathbb{E}}_0[\phi_1^d]$ by $c_1$. 

\begin{lemma}[Telescoping identity for balancing weights]\label{prop:telescoping}
Assume each Gram matrix $\hat{G}_t$ is invertible, $t = 1, \ldots, T$. Let $\hat\eta_1, \ldots, \hat\eta_T$ be any coefficient vectors, $\hat\eta_t \in \mathbb{R}^{k_t}$, with targets $\hat\tau_t$ as in~\eqref{eq:riesz-target} and residuals $c_t$ as in~\eqref{eq:riesz-residual}, and let $\hat\beta_t^{OLS}$ denote the OLS outcome regression coefficients~\eqref{eq:backward-ols}. Then
\begin{equation}\label{eq:telescoping}
    \hat\theta^{Q} \;=\; \hat\theta^{OLS} - \sum_{t=1}^T c_t'\hat\beta_t^{OLS}.
\end{equation}
\end{lemma}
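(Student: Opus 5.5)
Throughout, $\hat\theta^{OLS}$ denotes the OLS plug-in $\hat{\mathbb{E}}_0[(\phi_1^d)']\hat\beta_1^{OLS} = \hat\tau_1'\hat\beta_1^{OLS}$. The proof is a telescoping sum over stages, using only the OLS backward-pass equations~\eqref{eq:backward-ols} and the definition of $c_t$ in~\eqref{eq:riesz-residual}. Nothing about how the $\hat\eta_t$ were produced is needed. For each stage $t$, consider the scalar
\[
D_t := \hat\tau_t'\hat\beta_t^{OLS} - \hat\eta_t'\hat{G}_t\hat\beta_t^{OLS}.
\]
Since $\hat{G}_t$ is symmetric, $\hat\eta_t'\hat{G}_t = (\hat{G}_t\hat\eta_t)' = (\hat\tau_t - c_t)'$. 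Hence $D_t = c_t'\hat\beta_t^{OLS}$ directly from~\eqref{eq:riesz-residual}.

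Next we link consecutive stages. For $t \le T-1$, the OLS backward pass gives $\hat{G}_t\hat\beta_t^{OLS} = \hat{M}_t\hat\beta_{t+1}^{OLS}$. Therefore
\[
\hat\eta_t'\hat{G}_t\hat\beta_t^{OLS} = \hat\eta_t'\hat{M}_t\hat\beta_{t+1}^{OLS} = (\hat{M}_t'\hat\eta_t)'\hat\beta_{t+1}^{OLS} = \hat\tau_{t+1}'\hat\beta_{t+1}^{OLS},
\]
by the definition of $\hat\tau_{t+1}$ in~\eqref{eq:riesz-target}. At the last stage, $\hat{G}_T\hat\beta_T^{OLS} = \hat{\mathbb{E}}_T[Y\phi_T]$. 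This gives $\hat\eta_T'\hat{G}_T\hat\beta_T^{OLS} = \hat{\mathbb{E}}_T[Y\phi_T']\hat\eta_T = \hat\theta^{Q}$.

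Summing $D_t$ over $t=1,\dots,T$ then telescopes. The sum equals
\[
\hat\tau_1'\hat\beta_1^{OLS} + \sum_{t=1}^{T-1}\bigl(\hat\tau_{t+1}'\hat\beta_{t+1}^{OLS} - \hat\eta_t'\hat{G}_t\hat\beta_t^{OLS}\bigr) - \hat\eta_T'\hat{G}_T\hat\beta_T^{OLS}.
\]
Each bracketed term vanishes by the previous step. The first term is $\hat\theta^{OLS}$ and the last is $\hat\theta^{Q}$. So $\sum_t c_t'\hat\beta_t^{OLS} = \hat\theta^{OLS} - \hat\theta^{Q}$, which is~\eqref{eq:telescoping}.

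There is no real obstacle; the argument is pure bookkeeping. The one point needing care is matching indices between $\hat{M}_t$ and $\hat\tau_{t+1}$: the cross-moment at stage $t$ carries $\hat\eta_t$ forward into the stage-$(t+1)$ target. We also rely on symmetry of $\hat{G}_t$. Invertibility of $\hat{G}_t$ is used only so that $\hat\beta_t^{OLS}$ is well defined. As a sanity check, the case $T=1$ recovers the statement of \citet{bruns2025augmented}, and setting every $c_t=0$ recovers Theorem~\ref{prop:ols-equiv}.
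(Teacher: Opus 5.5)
Your proof is correct and is essentially the paper's argument. The paper tracks $S_t = (\hat\beta_t^{OLS})'\hat{G}_t\hat\eta_t$ and shows $S_T = \hat\theta^{Q}$, $S_t = S_{t-1} - c_t'\hat\beta_t^{OLS}$ for $t \geq 2$, and $S_1 = \hat\theta^{OLS} - c_1'\hat\beta_1^{OLS}$; this is your telescoping sum of the $D_t$ with the same ingredients (symmetry of $\hat{G}_t$, the residual definition, the OLS backward-pass equations), and it likewise needs no optimality condition on the $\hat\eta_t$.
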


\begin{proof}
See Appendix~\ref{app:proof-telescoping}.
\end{proof}

\begin{remark}[Feature-shift representation]
At $T = 1$, equation~\eqref{eq:telescoping} is $\hat\theta^{Q} = (\hat{\mathbb{E}}_0[\phi_1^d] - c_1)'\hat\beta_1^{OLS}$, the feature-shift representation of \citet[Proposition~3.1]{bruns2025augmented}. For $T \geq 2$ the correction $\sum_t c_t'\hat\beta_t^{OLS}$ changes OLS coefficients at every stage, so no single shift of stage-1 covariates reproduces the balancing weights estimator.
\end{remark}

\subsection{The doubly robust estimator under general penalties}
\label{sec:general-dr}

Here, we decompose the doubly robust estimator by combining Lemma \ref{prop:telescoping} with a similar telescoping identity for $\hat\theta^{DR}$ to $\hat\theta^{Q}$. 

\begin{theorem}[Telescoping identity for the doubly robust estimator]\label{prop:general-shrinkage}
Let $\hat\eta_1, \ldots, \hat\eta_T$ be any coefficient vectors, $\hat\eta_t \in \mathbb{R}^{k_t}$, with targets $\hat\tau_t$ as in~\eqref{eq:riesz-target} and residuals $c_t$ as in~\eqref{eq:riesz-residual}, and let $\hat\beta_1^{Gen}, \ldots, \hat\beta_T^{Gen}$ be any outcome regressions. Then,
\begin{equation}\label{eq:dr-ipw-identity}
    \hat\theta^{DR} \;=\; \hat\theta^{Q} + \sum_{t=1}^T c_t'\hat\beta_t^{Gen}.
\end{equation}
If in addition each $\hat{G}_t$ is invertible, with $\hat\beta_t^{OLS}$ the OLS backward-pass coefficients~\eqref{eq:backward-ols}, then
\begin{equation}\label{eq:general-decomposition}
    \hat\theta^{DR} \;=\; \hat\theta^{OLS} + \sum_{t=1}^T c_t'\bigl(\hat\beta_t^{Gen} - \hat\beta_t^{OLS}\bigr).
\end{equation}
\end{theorem}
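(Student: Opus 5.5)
The first identity \eqref{eq:dr-ipw-identity} is pure algebra, so I would expand $\hat\theta^{DR}$ in coefficient form and telescope. In the linear framework, by \eqref{eq:representability}, the plug-in term is $\hat\tau_1'\hat\beta_1^{Gen}$. For $t<T$, the $t$-th correction is
\[
\hat{\mathbb{E}}_t\bigl[\phi_t'\hat\eta_t\bigl((\phi_{t+1}^d)'\hat\beta_{t+1}^{Gen}-\phi_t'\hat\beta_t^{Gen}\bigr)\bigr]
=\hat\eta_t'\hat M_t\hat\beta_{t+1}^{Gen}-\hat\eta_t'\hat G_t\hat\beta_t^{Gen}.
\]
The last correction is $\hat{\mathbb{E}}_T[Y\phi_T']\hat\eta_T-\hat\eta_T'\hat G_T\hat\beta_T^{Gen}=\hat\theta^{Q}-\hat\eta_T'\hat G_T\hat\beta_T^{Gen}$.

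Using $\hat\eta_t'\hat M_t=\hat\tau_{t+1}'$, which is the definition \eqref{eq:riesz-target}, the sum becomes
\[
\hat\theta^{DR}=\hat\theta^{Q}+\sum_{t=1}^T\bigl(\hat\tau_t-\hat G_t\hat\eta_t\bigr)'\hat\beta_t^{Gen}.
\]
This grouping needs symmetry of $\hat G_t$, so that $\hat\eta_t'\hat G_t\hat\beta=(\hat G_t\hat\eta_t)'\hat\beta$. Here $\hat\tau_1'\hat\beta_1^{Gen}$ pairs with stage 1, and each cross term $\hat\tau_{t+1}'\hat\beta_{t+1}^{Gen}$ pairs with stage $t+1$. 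By \eqref{eq:riesz-residual} each summand is exactly $c_t'\hat\beta_t^{Gen}$, which gives \eqref{eq:dr-ipw-identity}. No invertibility or optimality of the $\hat\eta_t$ is used, matching the ``any coefficient vectors'' hypothesis.

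For \eqref{eq:general-decomposition}, I would invoke Lemma~\ref{prop:telescoping}, which under invertibility gives $\hat\theta^{Q}=\hat\theta^{OLS}-\sum_t c_t'\hat\beta_t^{OLS}$ for arbitrary $\hat\eta_t$. Substituting this into \eqref{eq:dr-ipw-identity} and combining the two sums yields the claim. Here $\hat\theta^{OLS}$ is read as $\hat\tau_1'\hat\beta_1^{OLS}=\hat\theta^{P}$ at OLS, as in Theorem~\ref{prop:ols-equiv}.

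The only step needing care is the index bookkeeping in the telescoping: the boundary terms at $t=1$ (the plug-in, with $\hat\tau_1=\hat{\mathbb{E}}_0[\phi_1^d]$) and at $t=T$ (where the $Y$-term produces $\hat\theta^{Q}$ rather than a cross-moment) must be matched correctly. I would also check that the sample operators line up, in particular that $\hat M_t$ is taken under the stage-$t$ measure, as defined, for both the correction and the Riesz target.
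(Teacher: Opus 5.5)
Your proposal is correct and is essentially the paper's argument: expand each correction into $\hat\eta_t'\hat M_t\hat\beta_{t+1}^{Gen}-\hat\eta_t'\hat G_t\hat\beta_t^{Gen}$, regroup by stage into $c_t'\hat\beta_t^{Gen}$ using only the target and residual definitions, then combine with Lemma~\ref{prop:telescoping} for the second identity. The only cosmetic difference is that the paper substitutes $\hat\eta_t'\hat M_t=\hat\eta_{t+1}'\hat G_{t+1}+c_{t+1}'$ and cancels the $\hat G_t$ terms, whereas you regroup directly via $\hat\tau_{t+1}=\hat M_t'\hat\eta_t$, which is the same computation.
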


\begin{proof}
See Appendix~\ref{app:proof-general}.
\end{proof}

This result shows why the doubly robust estimator shrinks less to OLS as the number of time periods increases. 

\begin{remark}[Balancing weights mirror interpretation]
In a comment on \cite{bruns2025augmented}, \cite{liu2025discussion} and \cite{shen2025discussion} note that one can equivalently write the doubly robust estimator as a weighting estimator with augmented weights that undersmooth relative to the base weighting
model. The results in this paper can also be mirrored to the perspective of balancing weights.
\end{remark}


\section{Conclusion}
\label{sec:conclusion}

This paper takes the dissection of \citet{bruns2025augmented} from a single penalised regression into the recursive setting of \citet{chernozhukov2022nested}, in which time-varying treatment effects, surrogate identified treatment effects, and mediation effects use $T$ stages of regression. 

The analysis characterises the resulting estimators under three forms of regularisation. 
Under OLS at every stage the plug-in, balancing weight, and doubly robust estimators coincide exactly.
Under ridge penalties, the weight on the pure-OLS fit decays geometrically in the number of time periods. 
Under a general convex penalty the closed forms break down, but each stage still contributes one correction term, an inner product between the stage Riesz residual and the gap between the penalised and OLS outcome coefficients. 

There are two takeaways. First, the main equivalence from cross sectional causal inference extends to longitudinal causal inference. Second, the interpretation of debiasing as undersmoothing weakens.

\spacingset{1.5}
\bibliography{references}

\newpage
\appendix

\section{Proof for Section~\ref{sec:setup}}\label{sec:setup-proof}

Fix a stage $t$ and condition on $\hat\eta_{t-1}$, which defines $\hat\tau_t$. The stage-$t$ problem trades the variance term $\tfrac{1}{2}\eta'\hat{G}_t\eta$ against the imbalance $\|\hat{G}_t\eta - \hat\tau_t\|_*$ and the penalty $Q_t$, which is the structure of the $T=1$ problem of \citet[Section~2.3, Appendix~A]{bruns2025augmented} with their target $\hat{\mathbb{E}}[\phi^d]$ replaced by the fixed vector $\hat\tau_t$; their equivalence therefore applies stage by stage. 

\emph{Constrained $\Leftrightarrow$ penalised.} Since $\hat{G}_t$ is invertible, $\eta = \hat{G}_t^{-1}\hat\tau_t$ attains zero imbalance and is strictly feasible for any $\delta_t^{(C)} > 0$, so Slater's condition holds and Lagrangian duality makes~\eqref{eq:nested-constrained} and~\eqref{eq:nested-penalised} equivalent under a monotone map $\delta_t^{(C)} \mapsto \delta_t^{(P)}$, with the multiplier on the constraint playing the role of the penalty weight.

\emph{Penalised $\Leftrightarrow$ Riesz.} In the Euclidean case the first-order condition of~\eqref{eq:nested-penalised} is $2\hat{G}_t(\hat{G}_t\eta - \hat\tau_t) + 2\delta_t^{(P)}\hat{G}_t\eta = 0$; cancelling the invertible $\hat{G}_t$ leaves $\hat{G}_t\eta - \hat\tau_t + \delta_t^{(P)}\eta = 0$, which is the first-order condition of~\eqref{eq:nested-riesz-form} with $Q_t(\eta) = \|\eta\|_2^2$ at $\delta_t^{(R)} = \delta_t^{(P)}$. For the $\ell_\infty$ imbalance the conditions become subgradient inclusions and the equivalence requires pairing the norm with its dual penalty, $\ell_\infty$ with $\ell_1$. The bijection $\delta_t^{(P)} \leftrightarrow \delta_t^{(R)}$ is that of \citet[Appendix~A]{bruns2025augmented}, sample-dependent (here also through $\hat\tau_t$) and, for the $\ell_\infty$--$\ell_1$ pair, holding between solution sets. None of the steps uses the form of $\hat\tau_t$, so the proof is equivalent to that of the cross-sectional setting, and the equivalence holds at every stage. \hfill$\square$

\section{Proof for Section~\ref{sec:ols}}
\label{app:proof-ols}

\textit{Part 1: Each correction vanishes.}
The OLS normal equations make each stage's residual orthogonal to that stage's regressors. At the last stage,
\[
    \hat{G}_T\hat\beta_T^{OLS} = \hat{\mathbb{E}}_T[Y\phi_T]
    \quad\Longrightarrow\quad
    \hat{\mathbb{E}}_T[\phi_T\hat\varepsilon_T] = 0,
\]
where $\hat\varepsilon_T = Y - \phi_T'\hat\beta_T^{OLS}$. At each outer stage $t < T$, the equation $\hat{G}_t\hat\beta_t^{OLS} = \hat{M}_t\hat\beta_{t+1}^{OLS}$ gives
\[
    \hat{\mathbb{E}}_t[\phi_t \hat\varepsilon_t] = 0,
    \quad\text{where } \hat\varepsilon_t = (\phi_{t+1}^d)'\hat\beta_{t+1}^{OLS} - \phi_t'\hat\beta_t^{OLS}
\]
is the stage-$t$ residual~\eqref{eq:residuals} at the OLS fits. Since each estimated representer $\hat\alpha_t = \phi_t'\hat\eta_t^{OLS}$ lies in the span of $\phi_t$,
\[
    \hat{\mathbb{E}}_t[\hat\alpha_t\,\hat\varepsilon_t]
    = (\hat\eta_t^{OLS})'\hat{\mathbb{E}}_t[\phi_t\hat\varepsilon_t] = 0,
    \quad t = 1,\ldots,T.
\]
This establishes~\eqref{eq:ols-corrections-zero}, and with~\eqref{eq:dr-estimator} it gives $\hat\theta^{DR} = \hat\theta^{P}$. 

\medskip
\textit{Part 2: All terms are equal.}
We show that $(\hat\eta_t^{OLS})'\hat{G}_t\hat\beta_t^{OLS}$ takes the same value at every stage, and is equal to all the cross-stage inner products $(\hat\eta_t^{OLS})'\hat{M}_t\hat\beta_{t+1}^{OLS}$. Unlike Part 1, this part uses the OLS FOCs of both passes.

Using the last-stage OLS condition $\hat{\mathbb{E}}_T[Y\phi_T] = \hat{G}_T\hat\beta_T^{OLS}$:
\[
    \hat\theta^{Q} = \hat{\mathbb{E}}_T[Y\phi_T']\hat\eta_T^{OLS}
    = (\hat\eta_T^{OLS})'\hat{G}_T\hat\beta_T^{OLS}.
\]
Now step from stage $t+1$ to stage $t$. The forward OLS FOC gives $\hat{M}_t'\hat\eta_t^{OLS} = \hat{G}_{t+1}\hat\eta_{t+1}^{OLS}$, that is $(\hat\eta_t^{OLS})'\hat{M}_t = (\hat\eta_{t+1}^{OLS})'\hat{G}_{t+1}$, and the backward OLS FOC gives $\hat{M}_t\hat\beta_{t+1}^{OLS} = \hat{G}_t\hat\beta_t^{OLS}$. Therefore
\[
    (\hat\eta_{t+1}^{OLS})'\hat{G}_{t+1}\hat\beta_{t+1}^{OLS}
    = (\hat\eta_t^{OLS})'\hat{M}_t\hat\beta_{t+1}^{OLS}
    = (\hat\eta_t^{OLS})'\hat{G}_t\hat\beta_t^{OLS},
\]
where the first equality uses the forward FOC and the second the backward FOC. The cross-stage inner product equals the same-stage at both adjacent stages. Iterating from $t = T$ down to $t = 1$:
\[
    \hat\theta^{Q}
    = (\hat\eta_T^{OLS})'\hat{G}_T\hat\beta_T^{OLS}
    = (\hat\eta_{T-1}^{OLS})'\hat{G}_{T-1}\hat\beta_{T-1}^{OLS}
    = \cdots
    = (\hat\eta_1^{OLS})'\hat{G}_1\hat\beta_1^{OLS}
    = \hat{\mathbb{E}}_0[(\phi_1^d)']\hat\beta_1^{OLS}
    = \hat\theta^{P},
\]
where the last step uses $\hat\eta_1^{OLS} = \hat{G}_1^{-1}\hat{\mathbb{E}}_0[\phi_1^d]$ and the symmetry of $\hat{G}_1$. This establishes both~\eqref{eq:ols-bilinear} and the cross-stage equality. \hfill$\square$
\section{Proofs for Section~\ref{sec:ridge}}
\label{app:proof-ridge}

\subsection{Proof of Lemma~\ref{lem:ridge-equiv}}
\label{app:proof-ridge-lemma}

Under the lemma's hypothesis the two passes share the penalty matrices, $\delta_t = \lambda_t$, and we write $\lambda_t$ for both. Unrolling the forward Riesz recursion~\eqref{eq:ridge-riesz-focs} from stage $T$ down to stage 1:
\[
    \hat\eta_T^R
    = (\hat{G}_T+\lambda_T)^{-1}\hat{M}_{T-1}'(\hat{G}_{T-1}+\lambda_{T-1})^{-1}\hat{M}_{T-2}'
      \cdots(\hat{G}_2+\lambda_2)^{-1}\hat{M}_1'(\hat{G}_1+\lambda_1)^{-1}\hat{\mathbb{E}}_0[\phi_1^d],
\]
a product with one factor $(\hat{G}_t+\lambda_t)^{-1}$ per stage and one cross-moment $\hat{M}_{t-1}'$ between adjacent stages. The balancing estimator  is $\hat\theta^{Q,R} = \hat{\mathbb{E}}_T[Y\phi_T']\hat\eta_T^R$. Also, using that each $(\hat{G}_t+\lambda_t)^{-1}$ is symmetric, and transposing the (scalar) estimator, we get:
\[
    \hat\theta^{Q,R}
    = \hat{\mathbb{E}}_0[(\phi_1^d)'](\hat{G}_1+\lambda_1)^{-1}\hat{M}_1(\hat{G}_2+\lambda_2)^{-1}\hat{M}_2\cdots(\hat{G}_T+\lambda_T)^{-1}\hat{\mathbb{E}}_T[Y\phi_T].
\]
Unrolling the backward outcome recursion~\eqref{eq:ridge-outcome-focs} gives $\hat\beta_1^R = (\hat{G}_1+\lambda_1)^{-1}\hat{M}_1(\hat{G}_2+\lambda_2)^{-1}\hat{M}_2\cdots(\hat{G}_T+\lambda_T)^{-1}\hat{\mathbb{E}}_T[Y\phi_T]$, the same product, so the right-hand side equals $\hat{\mathbb{E}}_0[(\phi_1^d)']\hat\beta_1^R = \hat\theta^{P,R}$. \hfill$\square$

\subsection{Proof of Theorem~\ref{prop:ridge-shrinkage}}
\label{app:proof-ridge-prop}

Each $\hat{G}_t$ is invertible, as assumed in the proposition. We prove a slightly more general statement, indexed by the outer target. For a vector $w$, let the stage-1 Riesz coefficients solve $\hat\eta_1^R = (\hat{G}_1+\delta_1)^{-1}w$, and write $\hat\theta^{DR}(w)$ for the doubly robust estimator whose outer feature profile is $w$, so the proposition is the case $w = \hat{\mathbb{E}}_0[\phi_1^d]$. The claim is $\hat\theta^{DR}(w) = w'\hat\beta_1^{Aug}$. 
\medskip
\textit{Base case ($T=1$).} The doubly robust moment is
\[
    \hat\theta^{DR}(w)
    = w'\hat\beta_1^{Gen}
    + (\hat\eta_1^R)'\bigl(\hat{\mathbb{E}}_1[Y\phi_1] - \hat{G}_1\hat\beta_1^{Gen}\bigr).
\]
Using $(\hat\eta_1^R)' = w'(\hat{G}_1+\delta_1)^{-1} = w'A_1\hat{G}_1^{-1}$ and $\hat\beta_1^{OLS} = \hat{G}_1^{-1}\hat{\mathbb{E}}_1[Y\phi_1]$:
\begin{align*}
    \hat\theta^{DR}(w)
    &= w'(I-A_1)\hat\beta_1^{Gen}
     + w'A_1\hat\beta_1^{OLS}
     = w'\hat\beta_1^{Aug}.
\end{align*}

\medskip
\textit{Inductive step.} Suppose the claim holds for $T-1$ stages and every outer target. The $T$-stage doubly robust estimator with outer target $w$ splits as
\[
    \hat\theta^{DR}(w)
    = w'\hat\beta_1^{Gen}
    + (\hat\eta_1^R)'\bigl(\hat{M}_1\hat\beta_2^{Gen} - \hat{G}_1\hat\beta_1^{Gen}\bigr)
    + \sum_{t=2}^T\hat{\mathbb{E}}_t[\hat\alpha_t^R\hat\varepsilon_t].
\]
The term $(\hat\eta_1^R)'\hat M_1\hat\beta_2^{Gen}$ together with the sum over $t=2,\ldots,T$ is the doubly robust estimator for the inner $(T-1)$-stage problem on stages $2,\ldots,T$, whose outer target is $\hat{M}_1'\hat\eta_1^R$. By the inductive hypothesis at that target,
\[
    (\hat\eta_1^R)'\hat{M}_1\hat\beta_2^{Gen} + \sum_{t=2}^T\hat{\mathbb{E}}_t[\hat\alpha_t^R\hat\varepsilon_t]
    = (\hat\eta_1^R)'\hat{M}_1\hat\beta_2^{Aug},
\]
where $\hat\beta_2^{Aug}$ follows the recursion~\eqref{eq:augT}--\eqref{eq:augt} on stages $2,\ldots,T$. Substituting and using $(\hat\eta_1^R)'\hat{G}_1 = w'A_1$ together with $(\hat\eta_1^R)'\hat{M}_1 = w'A_1\hat{G}_1^{-1}\hat{M}_1$:
\begin{align*}
    \hat\theta^{DR}(w)
    &= w'(I-A_1)\hat\beta_1^{Gen}
     + w'A_1\underbrace{\hat{G}_1^{-1}\hat{M}_1\hat\beta_2^{Aug}}_{=\,\hat\beta_1^{Aug\text{-}OLS}}
     = w'\hat\beta_1^{Aug}.
\end{align*}
Setting $w = \hat{\mathbb{E}}_0[\phi_1^d]$ gives the proposition. \hfill$\square$

\subsection{Proof of Corollary~\ref{cor:geometric}}
\label{app:proof-ridge-cor}

Throughout, $\hat{G}_t = \sigma_t^2 I$, so $A_t = a_t I$ with $a_t = \sigma_t^2/(\sigma_t^2+\delta_t)$ and $(\hat{G}_t+\delta_t)^{-1} = a_t\hat{G}_t^{-1}$.

\medskip
\textit{Part (i).} Substitute~\eqref{eq:aug-reg} into~\eqref{eq:augt} and iterate from stage 1 inward. Every branch of the recursion that takes a factor $(1-a_t)$ terminates in some $\hat\beta_t^{Gen}$. The single branch that takes the factor $a_t$ at every stage terminates in $\hat\beta_T^{OLS}$ and contributes
\[
    a_1\hat{G}_1^{-1}\hat{M}_1\, a_2\hat{G}_2^{-1}\hat{M}_2 \cdots a_{T-1}\hat{G}_{T-1}^{-1}\hat{M}_{T-1}\, a_T\hat{G}_T^{-1}\hat{\mathbb{E}}_T[Y\phi_T]
    = \Bigl(\textstyle\prod_{t=1}^T a_t\Bigr)\hat\beta_1^{OLS},
\]
by the OLS backward pass~\eqref{eq:backward-ols}.

\subsection{Proof of Corollary~\ref{cor:contraction}}
\label{app:proof-ridge-contraction}

Part~(i) of the preceding proof did not use the scalar structure. Instead, substituting~\eqref{eq:aug-reg} into~\eqref{eq:augt} and iterating from stage 1 inward, every part of the recursion that takes a factor $I - A_t$ terminates in some $\hat\beta_t^{Gen}$, and the single part that is weighted by $A_t$ at every stage ends in $\hat\beta_T^{OLS}$.

For a scalar penalty $\delta_t > 0$, the matrix $A_t = (\hat{G}_t + \delta_t I)^{-1}\hat{G}_t$ is a product of two symmetric matrices that commute, both functions of $\hat{G}_t$, hence symmetric, with eigenvalues $\mu_{t,j}/(\mu_{t,j} + \delta_t)$. The map $\mu \mapsto \mu/(\mu + \delta_t)$ is increasing and below one, so $\|A_t\|_2 = \max_j \mu_{t,j}/(\mu_{t,j} + \delta_t) = \rho_t < 1$. The bound is submultiplicativity of the Euclidean operator norm along the product of the anchor:
\[
    \bigl\|\hat\beta_1^{Anc}\bigr\|_2
    \;\leq\; \|A_1\|_2\,\bigl\|\hat{G}_1^{-1}\hat{M}_1\bigr\|_2\,\|A_2\|_2 \cdots \bigl\|\hat{G}_{T-1}^{-1}\hat{M}_{T-1}\bigr\|_2\,\|A_T\|_2\,\bigl\|\hat\beta_T^{OLS}\bigr\|_2,
\]
which is the stated product after collecting the $\rho_t$. \hfill$\square$
\section{Proofs for Section~\ref{sec:general}}
\label{app:proof-general-all}

\subsection{Proof of Lemma~\ref{prop:telescoping}}
\label{app:proof-telescoping}

For $t = 1, \ldots, T$, define the bilinear forms
\[
    S_t := (\hat\beta_t^{OLS})'\hat{G}_t\hat\eta_t.
\]
The proof has three steps. (1) $S_T$ is the IPW estimator, (2) the $S_t$ obey a one-step recursion that peels off one correction per stage, and (3) and the recursion bottoms out at $\hat\theta^{OLS}$.

\emph{Step 1.} By the last-stage OLS normal equation $\hat{G}_T\hat\beta_T^{OLS} = \hat{\mathbb{E}}_T[Y\phi_T]$ and the symmetry of $\hat{G}_T$,
\[
    \hat\theta^{Q} = \hat{\mathbb{E}}_T[Y\phi_T']\hat\eta_T = (\hat\beta_T^{OLS})'\hat{G}_T\hat\eta_T = S_T.
\]

\emph{Step 2.} Fix $t \geq 2$. The residual definition~\eqref{eq:riesz-residual} gives $\hat{G}_t\hat\eta_t = \hat\tau_t - c_t$ with $\hat\tau_t = \hat{M}_{t-1}'\hat\eta_{t-1}$, so
\[
    S_t = (\hat\beta_t^{OLS})'\hat{M}_{t-1}'\hat\eta_{t-1} - c_t'\hat\beta_t^{OLS}.
\]
The OLS backward pass~\eqref{eq:backward-ols} gives $\hat{M}_{t-1}\hat\beta_t^{OLS} = \hat{G}_{t-1}\hat\beta_{t-1}^{OLS}$, so the first term is $(\hat\beta_{t-1}^{OLS})'\hat{G}_{t-1}\hat\eta_{t-1} = S_{t-1}$ and
\[
    S_t = S_{t-1} - c_t'\hat\beta_t^{OLS}.
\]

\emph{Step 3.} At stage 1 the target is $\hat\tau_1 = \hat{\mathbb{E}}_0[\phi_1^d]$, so the same substitution gives
\[
    S_1 = (\hat\beta_1^{OLS})'\bigl(\hat{\mathbb{E}}_0[\phi_1^d] - c_1\bigr) = \hat\theta^{OLS} - c_1'\hat\beta_1^{OLS}.
\]
Iterating Step 2 from $t = T$ down to $t = 2$ and inserting Step 3,
\[
    \hat\theta^{Q} = S_T = S_1 - \sum_{t=2}^T c_t'\hat\beta_t^{OLS} = \hat\theta^{OLS} - \sum_{t=1}^T c_t'\hat\beta_t^{OLS}.
\]
Beyond the OLS normal equations of the backward pass, the argument uses only the definition of the residuals; no optimality condition on $\hat\eta_t$ enters. \hfill$\square$

\subsection{Proof of Theorem~\ref{prop:general-shrinkage}}
\label{app:proof-general}

Identity~\eqref{eq:general-decomposition} follows from combining~\eqref{eq:dr-ipw-identity} with the telescoping identity~\eqref{eq:telescoping}. It remains to prove~\eqref{eq:dr-ipw-identity}.

By definition of the DR estimator:
\[
    \hat\theta^{DR} = \hat\theta^{P,Gen} + \sum_{t=1}^{T} \hat\eta_t'\hat{\mathbb{E}}_t[\phi_t\hat\varepsilon_t^{Gen}],
\]
where $\hat\theta^{P,Gen} = \hat{\mathbb{E}}_0[(\phi_1^d)']\hat\beta_1^{Gen}$, as defined in Section~\ref{sec:nested-regression}, and $\hat\varepsilon_t^{Gen}$ are the residuals from the general outcome regressions. Expanding the residuals using $\hat\varepsilon_T^{Gen} = Y - \phi_T'\hat\beta_T^{Gen}$ and $\hat\varepsilon_t^{Gen} = (\phi_{t+1}^d)'\hat\beta_{t+1}^{Gen} - \phi_t'\hat\beta_t^{Gen}$ for $t < T$:
\begin{equation}\label{eq:correction-expanded}
    \sum_{t=1}^{T} \hat\eta_t'\hat{\mathbb{E}}_t[\phi_t\hat\varepsilon_t^{Gen}]
    = \underbrace{\hat\eta_T'\hat{\mathbb{E}}_T[Y\phi_T]}_{=\,\hat\theta^{Q}}
    - \sum_{t=1}^{T} \hat\eta_t'\hat{G}_t\hat\beta_t^{Gen}
    + \sum_{t=1}^{T-1}\hat\eta_t'\hat{M}_t\hat\beta_{t+1}^{Gen}.
\end{equation}
The key step rewrites the cross-moment terms through the residuals. The residual definition~\eqref{eq:riesz-residual} at stage $t+1$, whose target is $\hat\tau_{t+1} = \hat{M}_t'\hat\eta_t$, gives $\hat{G}_{t+1}\hat\eta_{t+1} = \hat{M}_t'\hat\eta_t - c_{t+1}$, so
\begin{equation}\label{eq:cross-moment-substitution}
    \hat\eta_t'\hat{M}_t = \hat\eta_{t+1}'\hat{G}_{t+1} + c_{t+1}'.
\end{equation}
Substituting~\eqref{eq:cross-moment-substitution} into~\eqref{eq:correction-expanded} and re-indexing ($t+1 \to t$ in the sum):
\[
    \sum_{t=1}^{T-1}\hat\eta_t'\hat{M}_t\hat\beta_{t+1}^{Gen}
    = \sum_{t=2}^{T}\hat\eta_t'\hat{G}_t\hat\beta_t^{Gen} + \sum_{t=2}^{T}c_t'\hat\beta_t^{Gen}.
\]
The $\hat\eta_t'\hat{G}_t\hat\beta_t^{Gen}$ terms for $t = 2, \ldots, T$ cancel between the two sums in~\eqref{eq:correction-expanded}, leaving:
\[
    \sum_{t=1}^{T} \hat\eta_t'\hat{\mathbb{E}}_t[\phi_t\hat\varepsilon_t^{Gen}]
    = \hat\theta^{Q} - \hat\eta_1'\hat{G}_1\hat\beta_1^{Gen} + \sum_{t=2}^{T}c_t'\hat\beta_t^{Gen}.
\]
The residual definition at stage 1, with target $\hat\tau_1 = \hat{\mathbb{E}}_0[\phi_1^d]$, gives $\hat\eta_1'\hat{G}_1 = \hat{\mathbb{E}}_0[(\phi_1^d)'] - c_1'$, so $\hat\eta_1'\hat{G}_1\hat\beta_1^{Gen} = \hat\theta^{P,Gen} - c_1'\hat\beta_1^{Gen}$. Therefore:
\[
    \hat\theta^{DR} = \hat\theta^{P,Gen} + \hat\theta^{Q} - \hat\theta^{P,Gen} + \sum_{t=1}^{T}c_t'\hat\beta_t^{Gen}
    = \hat\theta^{Q} + \sum_{t=1}^{T}c_t'\hat\beta_t^{Gen}.
\]
Only the definitions of the residuals and targets enter: no optimality condition on $\hat\eta_t$, no invertibility of any $\hat{G}_t$, and no structure on the penalties. \hfill$\square$

\section{Extension to kernel ridge regression}
\label{app:kernel-ridge}

The results of Section~\ref{sec:ridge} are stated for finite-dimensional dictionaries $\phi_t \in \mathbb{R}^{k_t}$ with $k_t \leq n$. This appendix shows that all results extend to the kernel ridge regression (KRR) setting, where the feature maps are induced by reproducing kernel Hilbert spaces (RKHSs) and the dimension may be infinite. The extension parallels \citet[Appendix~B]{bruns2025augmented}, who show that the cross-sectional ridge results apply to the KRR setting. For clarity of exposition, we suppose there is only one measure and $n$ total observations.

\subsection{Setup}
\label{app:kernel-setup}

At each stage $t$, let $\mathcal{H}_t$ be an RKHS on $\mathcal{Z}_t$ with kernel $K_t$. The feature map is $\phi_t(z) = K_t(z, \cdot) \in \mathcal{H}_t$, which may be infinite-dimensional. All estimators below depend on the data only through kernel evaluations at the observed and counterfactual points. We write the leading case of Section~\ref{sec:linear-setup}, in which each $m_t$ substitutes a counterfactual point $Z_t^d$. For a contrast, the counterfactual entries below become the corresponding signed combinations of kernel evaluations, and every identity is linear in those entries. Define the $n \times n$ kernel matrices:
\[
    [K_{p,t}]_{ij} := K_t(Z_{t,i}, Z_{t,j}), \qquad
    [\bar{K}_{q,t}]_i := \hat{\mathbb{E}}\!\left[K_t(Z_t^d, Z_{t,i})\right],
\]
and for the cross-stage kernels ($t < T$), built from the inner stage-$(t+1)$ kernel evaluated at the counterfactual:
\[
    [K_{M,t}]_{ij} := K_{t+1}(Z_{t+1,i}^d, Z_{t+1,j}),
\]
so that $K_{M,t}\beta_{t+1}^{\mathcal{H}}$ is the vector of generated outcomes $\hat{f}_{t+1}(Z_{t+1,i}^d)$ that the stage-$t$ regression takes as target. The stage-$t$ outcome loss evaluates the candidate function only at the observed points $Z_{t,i}$, so by the representer theorem the kernel ridge solution lies in the span of the $n$ observed kernel evaluations: $\hat{f}_t(z) = K_{z,p_t}\beta_t^{\mathcal{H}}$ for a coefficient vector $\beta_t^{\mathcal{H}} \in \mathbb{R}^n$, where $K_{z,p_t}$ is the row vector of kernel evaluations between $z$ and the training points. The Riesz loss evaluates $\alpha$ at the counterfactual points $Z_{t,i}^d$ as well as at the observed ones, so its minimiser lies in the span of both sets of kernel functions, and no expansion over the training points alone represents it. The balancing weights and every debiasing correction read $\hat\alpha_t$ only through its values at the training points, so we track those values directly and define
\[
    [\eta_t^{\mathcal{H}}]_i := \tfrac{1}{n}\,\hat\alpha_t(Z_{t,i}),
\]
the weight on unit $i$, scaled so that the IPW estimator is the plain inner product $\hat\theta^{Q,\mathcal{H}} = Y'\eta_T^{\mathcal{H}}$.

\subsection{Kernel ridge First-Order Conditions}

Here, we set the Riesz penalties equal to the outcome penalties, $\delta_t = \lambda_t$, and write $\lambda_t$ for both. The backward and forward passes~\eqref{eq:ridge-outcome-focs}--\eqref{eq:ridge-riesz-focs} become:
\begin{align}
    \beta_T^{\mathcal{H}} &= (K_{p,T} + \lambda_T)^{-1}Y, &
    \eta_1^{\mathcal{H}} &= (K_{p,1} + \lambda_1)^{-1}\bar{K}_{q,1}, \notag\\
    \beta_t^{\mathcal{H}} &= (K_{p,t} + \lambda_t)^{-1}K_{M,t}\beta_{t+1}^{\mathcal{H}}, &
    \eta_t^{\mathcal{H}} &= (K_{p,t} + \lambda_t)^{-1}K_{M,t-1}'\eta_{t-1}^{\mathcal{H}},
    \label{eq:kernel-focs}
\end{align}
for $t = T{-}1, \ldots, 1$ (backward) and $t = 2, \ldots, T$ (forward). These are identical in structure to~\eqref{eq:ridge-outcome-focs}--\eqref{eq:ridge-riesz-focs}, with $\hat{G}_t \mapsto K_{p,t}$, $\hat{M}_t \mapsto K_{M,t}$, $\hat{\mathbb{E}}[Y\phi_T] \mapsto Y$, and $\hat{\mathbb{E}}[\phi_1^d] \mapsto \bar{K}_{q,1}$.\footnote{Because $K_{p,t} = \Phi_t\Phi_t'$ is unnormalised whereas $\hat{G}_t = n^{-1}\Phi_t'\Phi_t$ contains the $1/n$, the penalty $\lambda_t$ in this appendix equals $n$ times the stage-$t$ penalty of Section~\ref{sec:ridge}. The fitted outcome functions and the weight values coincide under this rescaling.} The push-through identity holds for feature operators as for matrices, so the results apply when $\mathcal{H}_t$ is infinite-dimensional.

\subsection{Extension of the main results}

The results from Section~\ref{sec:ridge} apply to the RKHS setting by making the substitutions above. For the cross-moment, $\hat{M}_t$ becomes $K_{M,t}$, the inner-kernel matrix that turns stage-$(t+1)$ coefficients into the generated outcomes used as the response of the stage-$t$ regression. Given that substitution, the proofs in Appendix~\ref{app:proof-ridge} operate on the matrix product
\[
    (\hat{G}_1 + \lambda_1)^{-1}\hat{M}_1(\hat{G}_2 + \lambda_2)^{-1}\hat{M}_2 \cdots (\hat{G}_T + \lambda_T)^{-1},
\]
and the remaining properties they use are: (i)~each $(\hat{G}_t + \lambda_t)^{-1}$ is symmetric, and (ii)~the product is sandwiched between two vectors to produce a scalar. Both hold equally when $\hat{G}_t$ is replaced by $K_{p,t}$ (symmetric and positive semidefinite) and $\lambda_t \succ 0$ ensures invertibility.

\paragraph{Lemma~\ref{lem:ridge-equiv} (KRR version).} The kernel IPW and kernel regression estimators coincide:
\[
    \hat\theta^{Q,\mathcal{H}} = Y'(K_{p,T}+\lambda_T)^{-1}K_{M,T-1}'\cdots(K_{p,1}+\lambda_1)^{-1}\bar{K}_{q,1}
    = \bar{K}_{q,1}'(K_{p,1}+\lambda_1)^{-1}K_{M,1}\cdots(K_{p,T}+\lambda_T)^{-1}Y = \hat\theta^{P,\mathcal{H}},
\]
exactly as in the proof of Lemma~\ref{lem:ridge-equiv}.

\paragraph{Theorem~\ref{prop:ridge-shrinkage} (KRR version).} Define $A_t^{\mathcal{H}} := (K_{p,t} + \lambda_t)^{-1}K_{p,t}$. The doubly robust estimator satisfies $\hat\theta^{DR,\mathcal{H}} = \bar{K}_{q,1}'\beta_1^{Aug,\mathcal{H}}$, where the augmented kernel coefficients follow the recursion:
\begin{align*}
    \beta_T^{Aug,\mathcal{H}} &= (I - A_T^{\mathcal{H}})\beta_T^{Gen,\mathcal{H}} + A_T^{\mathcal{H}}\beta_T^{OLS,\mathcal{H}}, \\
    \beta_t^{Aug\text{-}OLS,\mathcal{H}} &:= K_{p,t}^{\dagger}K_{M,t}\beta_{t+1}^{Aug,\mathcal{H}}, \\
    \beta_t^{Aug,\mathcal{H}} &= (I - A_t^{\mathcal{H}})\beta_t^{Gen,\mathcal{H}} + A_t^{\mathcal{H}}\beta_t^{Aug\text{-}OLS,\mathcal{H}},
\end{align*}
where $K_{p,t}^{\dagger}$ is the Moore--Penrose pseudoinverse, $\beta_T^{OLS,\mathcal{H}} := K_{p,T}^{\dagger}Y$ is the minimum-norm KRR solution at $\lambda_T = 0$, and $\beta_t^{Aug\text{-}OLS,\mathcal{H}}$ is its earlier-stage analogue: the minimum-norm regression of the augmented generated outcomes $K_{M,t}\beta_{t+1}^{Aug,\mathcal{H}}$ on the stage-$t$ kernel features. The proof follows that of Theorem~\ref{prop:ridge-shrinkage}, with $\hat{G}_t^{-1}$ replaced by $K_{p,t}^{\dagger}$, and one addition for the rank-deficient case. When $K_{p,t}$ is singular the finite collapse $A_t\hat{G}_t^{-1} = (\hat{G}_t+\lambda_t)^{-1}$ becomes
\[
    A_t^{\mathcal{H}}K_{p,t}^{\dagger} = (K_{p,t}+\lambda_t)^{-1}K_{p,t}K_{p,t}^{\dagger} = (K_{p,t}+\lambda_t)^{-1}P_t,
\]
where $P_t := K_{p,t}K_{p,t}^{\dagger}$ projects onto $\mathrm{range}(K_{p,t})$, so $\beta_t^{Aug,\mathcal{H}}$ is pinned only up to $\ker(K_{p,t})$. This does not reach the estimator, by the reproducing property. A coefficient vector $v \in \ker(K_{p,t})$ represents the zero function of $\mathcal{H}_t$, since $\|\sum_j v_j K_t(\cdot, Z_{t,j})\|_{\mathcal{H}_t}^2 = v'K_{p,t}v = 0$, such that $K_{M,t-1}v = 0$ and $\bar{K}_{q,1}'v = 0$. The same argument places each stage target ($\bar{K}_{q,1}$ at stage 1, $K_{M,t-1}'\eta_{t-1}^{\mathcal{H}}$ at later stages) in $\mathrm{range}(K_{p,t})$, which $(K_{p,t}+\lambda_t)^{-1}$ preserves. Every contraction in the proof thus contains only the $P_t$-projected part of $\beta_t^{Aug,\mathcal{H}}$, the null-space slack drops out, and the finite-sample identity applies. For a strictly positive-definite kernel with distinct points $K_{p,t}$ is invertible, $P_t = I$, and the proof is identical.

\paragraph{Corollaries~\ref{cor:geometric} and~\ref{cor:contraction} (KRR version).} In the RKHS setting, the eigenvalues of $A_t^{\mathcal{H}}$ are $\mu_{t,j}/(\mu_{t,j} + \lambda_t)$, where $\mu_{t,j}$ are the eigenvalues of $K_{p,t}$. In the the kernel recursion, the anchor term replaces each $K_{p,t}^{\dagger}$ of the minimum-norm product by $(K_{p,t}+\lambda_t)^{-1}P_t = A_t^{\mathcal{H}}K_{p,t}^{\dagger}$, inserting one contraction per stage. The anchor contributes the recursive KRR plug-in $\hat\theta^{P,\mathcal{H}}$, the kernel analogue of the anchor $\hat\beta_1^R$ in the finite-dimensional recursion. The stage operators do not share eigenvectors in general, so the insertions do not collapse to a single scalar weight as in the scalar diagonal case. Given that the kernel penalties are scalars, each $A_t^{\mathcal{H}}$ is symmetric with spectral norm $\rho_t^{\mathcal{H}} := \max_j \mu_{t,j}/(\mu_{t,j} + \lambda_t) < 1$, and the contraction bound of Corollary~\ref{cor:contraction} carries over with $\hat{G}_t^{-1}\hat{M}_t \mapsto K_{p,t}^{\dagger}K_{M,t}$, $\hat\beta_T^{OLS} \mapsto K_{p,T}^{\dagger}Y$ and $\rho_t \mapsto \rho_t^{\mathcal{H}}$. The anchor's Euclidean norm is bounded by $\prod_{t=1}^T \rho_t^{\mathcal{H}}$ times the submultiplicative bound on the minimum-norm product. The geometric decay of the anchor weight in the number of time periods therefore persists in the kernel setting as it does for general Gram matrices: as a contraction bound, not an exact identity.

\begin{remark}[RKHS equivalence of \cite{bruns2025augmented}]
The extension to KRR changes the matrices but not the algebra. This parallels the observation of \citet[Appendix~B.4]{bruns2025augmented} that their Propositions~3.1 and~3.2 extend to the RKHS setting without alteration.
\end{remark}

\section{Interpretation and cross-stage structure}
\label{sec:general-interpretation}

The decomposition~\eqref{eq:general-decomposition} presents the doubly robust estimator as OLS plus one correction per time period. Each correction is an inner product of the Riesz residual $c_t$ and the gap $\hat\beta_t^{Gen} - \hat\beta_t^{OLS}$ between the penalised outcome regression and OLS.

A stage contributes only when both factors are non-zero. Estimate the stage-$t$ Riesz regression by OLS ($\delta_t = 0$) and $c_t = 0$.  Estimate the outcome regression by OLS ($\hat\beta_t^{Gen} = \hat\beta_t^{OLS}$) and it drops out too. 

\paragraph{Cross-stage coupling.}
The decomposition~\eqref{eq:general-decomposition} is exact in $(c_1, \ldots, c_T)$, but these feature shifts $c_t$ are not independent. Take $T = 2$, the number of time periods of the time-varying treatment, surrogacy and, mediation examples. The two residuals are:
\begin{align*}
    c_1 &= \hat{\mathbb{E}}_0[\phi_1^d] - \hat{G}_1\hat\eta_1, \\
    c_2 &= \hat{M}_1'\hat\eta_1 - \hat{G}_2\hat\eta_2.
\end{align*}
The stage-2 target is $\hat\tau_2 = \hat{M}_1'\hat\eta_1$, and solving the stage-1 residual definition for the coefficients gives $\hat\eta_1 = \hat{G}_1^{-1}(\hat{\mathbb{E}}_0[\phi_1^d] - c_1)$, so
\[
    \hat\tau_2 = \hat{M}_1'\hat{G}_1^{-1}(\hat{\mathbb{E}}_0[\phi_1^d] - c_1)
    = \hat{M}_1'\hat\eta_1^{OLS} - \hat{M}_1'\hat{G}_1^{-1}c_1,
\]
with $\hat\eta_1^{OLS} = \hat{G}_1^{-1}\hat{\mathbb{E}}_0[\phi_1^d]$ the unpenalised first-stage representer. A penalty at stage~1 shifts the stage-2 target by $-\hat{M}_1'\hat{G}_1^{-1}c_1$, and the shift passes into $\hat\eta_2$ and so into $c_2$. The stage-2 residual is not a function of the stage-2 penalty alone, but also reflects the regularisation bias of stage~1.

Specialising to ridge at both stages makes it easier to understand. With the shifted target, the stage-2 Riesz solution is $\hat\eta_2 = (\hat{G}_2 + \delta_2)^{-1}\hat\tau_2 = (\hat{G}_2 + \delta_2)^{-1}(\hat\tau_2^{OLS} - \hat{M}_1'\hat{G}_1^{-1}c_1)$, with $\hat\tau_2^{OLS} := \hat{M}_1'\hat\eta_1^{OLS}$ the stage-2 target under unpenalised stage-1 Riesz. The stage-2 residual $c_2 = \delta_2\hat\eta_2$ then splits in two:
\begin{equation}\label{eq:c2-decomposition}
    c_2 = \underbrace{(I - A_2)'\,\hat\tau_2^{OLS}}_{c_2^{(0)}} \;-\; \underbrace{(I - A_2)'\,\hat{M}_1'\hat{G}_1^{-1}c_1}_{c_2^{(1)}},
\end{equation}
since $\delta_2(\hat{G}_2 + \delta_2)^{-1} = (I - A_2)'$, the transpose of the Section~\ref{sec:ridge} shrinkage gap. Assume $\delta_2$ and $\hat{G}_2$ commute, in particular for a scalar penalty $\delta_2 = \delta I$. The first term $c_2^{(0)}$ is the stage-2 bias that would arise even with OLS at stage~1. The second term, $c_2^{(1)}$, can shrink the target further away from OLS. The extra bias that stage-1 penalisation adds through the cross-moment $\hat{M}_1'$, contracted by $(I - A_2)'$.

Substituting~\eqref{eq:c2-decomposition} into the decomposition~\eqref{eq:general-decomposition} at $T = 2$ splits $\hat\theta^{DR} - \hat\theta^{OLS}$ by stage:
\begin{align*}
    \text{Stage 1:} &\quad c_1'(\hat\beta_1^{Gen} - \hat\beta_1^{OLS}), \\
    \text{Stage 2:} &\quad \bigl[c_2^{(0)} - c_2^{(1)}\bigr]'(\hat\beta_2^{Gen} - \hat\beta_2^{OLS}).
\end{align*}
The inherited piece $c_2^{(1)}$ amplifies or dampens the stage-2 correction according to how $c_1$ aligns with the outcome coefficients, through $\hat{M}_1'\hat{G}_1^{-1}$. In the scalar diagonal case $\hat{G}_t = \sigma_t^2 I$ it collapses to $c_2^{(1)} = (1 - a_2)\hat{M}_1'\sigma_1^{-2}c_1$, and since $c_1 = (1-a_1)\hat{\mathbb{E}}_0[\phi_1^d]$ its size scales as $(1-a_1)(1-a_2)$, the product of the two stage-wise shrinkage gaps. A stage-1 penalty moves the stage-2 correction only at second order in the shrinkage.

The same mechanism holds at any number of time periods. Writing $\hat\eta_t = \hat{G}_t^{-1}(\hat\tau_t - c_t)$ and feeding it into $\hat\tau_{t+1} = \hat{M}_t'\hat\eta_t$ gives the target recursion:
\begin{equation}\label{eq:target-recursion}
    \hat\tau_{t+1} = \hat{M}_t'\hat{G}_t^{-1}\hat\tau_t - \hat{M}_t'\hat{G}_t^{-1}c_t, \quad t = 1, \ldots, T-1.
\end{equation}
Each stage's target is perturbed by the residuals accumulated from all earlier stages, carried forward by $\hat{M}_t'\hat{G}_t^{-1}$. Expanding~\eqref{eq:general-decomposition} all the way down to the penalty parameters $\delta_t$, the Gram matrices and the cross-moments would surface products $\hat{M}_t'\hat{G}_t^{-1}c_t$ at every step. The $c_t$ notation suppresses these products in the display, but they are still present.

\paragraph{Comparison with \citet{bruns2025augmented}.}
At $T = 1$, Theorem~\ref{prop:general-shrinkage} is $\hat\theta^{DR} = \hat\theta^{OLS} + c_1'(\hat\beta_1^{Gen} - \hat\beta_1^{OLS})$. \citet{bruns2025augmented} write the same result (their Proposition~3.2) as a single outcome regression, $\hat\theta^{DR} = \hat{\mathbb{E}}_0[(\phi_1^d)']\hat\beta_1^{Aug}$, with $\hat\beta_1^{Aug}$ an affine blend of $\hat\beta_1^{Gen}$ and $\hat\beta_1^{OLS}$ whose mixing weights encode $c_1$. With recursion there is no such single-coefficient reading. For $T \geq 2$ the correction is $T$ separate inner products, one per stage, and no one augmented coefficient absorbs them. Only with ridge penalisation we can get an analytical solution, where $c_t = \delta_t\hat\eta_t^R$ and a closed-form Riesz recursion shows that the stage corrections fold back into a single backward recursion on augmented coefficients (Theorem~\ref{prop:ridge-shrinkage}).

\paragraph{Relationship to the mixed-bias property.}
The decomposition~\eqref{eq:general-decomposition} has the shape of the population mixed-bias property~\eqref{eq:mixed-bias}: a plain sum over stages, with no explicit cross-stage terms. In the population case the representer errors $\tilde\alpha_t = \alpha_t^* - \alpha_t$ are themselves recursively coupled through the Riesz representation theorem, exactly as the $c_t$ are coupled here. The finite-sample decomposition is the algebraic counterpart of the mixed-bias property, with the Riesz residual $c_t$ in the role of the representer error $\tilde\alpha_t$.

\paragraph{Proximity of the doubly robust estimator to OLS.}

In the a single time period with double ridge regularization and a scalar diagonal weighting matrix, the doubly robust estimator's deviation from OLS is a fixed fraction of the balancing weight estimator's deviation, so the doubly robust estimator is closer to OLS than the IPW or plug-in. Extending this to multiple time periods under the same simplifying assumptions, that changes. The doubly robust remains strictly closer to OLS than the IPW and plug-in, but as more periods are added, the doubly robust estimator moves towards the IPW and plug-in, not towards OLS. The same total weight on the OLS regression that governs the geometric-decay corollary also governs how much closer doubly robust sits to OLS here. 

This whole ordering is not guaranteed in general, as it depends on every direction in the regression shrinking at the same rate. Once the weighting matrix is not diagonal, different directions can contract at different rates, and uneven shrinkage can pull them apart instead of together, potentially leaving doubly robust farther from OLS than the IPW or plug-in.

\end{document}